\documentclass[journal,draftcls, onecolumn]{IEEEtran}
\usepackage[dvipdfmx]{graphicx}
\usepackage{amsmath}
\usepackage{amssymb}
\usepackage{amsfonts}
\usepackage{mathrsfs}
\usepackage{theorem}
\usepackage{color}
%vectors
\DeclareMathAlphabet{\bm}{OML}{cmm}{b}{it}
%%%%%%%Theorems%%%%%%%%%%%%%
%%%%%% Protocol %%%%%%%%%%%%%
\usepackage[ruled,vlined]{algorithm2e}
\usepackage{algorithmic}

%%%%%%%%%%%%%%%%%%%%%%%%  
\theorembodyfont{\rmfamily}
\newtheorem{theorem}{Theorem}
\newtheorem{lemma}{Lemma}
\newtheorem{definition}{Definition}
\newtheorem{corollary}{Corollary}
\newtheorem{remark}{Remark}
\newtheorem{proposition}{Proposition}

%%%%%%%%QED%%%%%%%%%%%%
\newcommand{\qed}{\hfill \IEEEQED}

%%%%%%%%%proof enviroment%%%%%%%%
%\newenvironment{proof}{%
% \noindent{\em Proof.\ }}{%
% \hspace*{\fill}\qed \\and
% \vspace{2ex}}
%%%%%%%%%%%%braket%%%%%%%%%%%

\newcommand{\bol}[1]{\mathbf{#1}}

%\renewcommand{\liminf}{\underline{\lim}}
%\renewcommand{\limsup}{\overline{\lim}}
%%%%%%% argmax argmin %%%%%%%%%%%%

%%%%%%% text color %%%%%%%%%%%%%%%%%%%%%%

%\newcommand{\textchange}[1]{#1}

% correct bad hyphenation here
\hyphenation{op-tical net-works semi-conduc-tor}

\begin{document}

\title{
Isomorphism Problem Revisited: 
Information Spectrum Approach
%Invariance of Information Spectrum
%Information Spectral Analysis on Interval Algorithm for Random Number Generation
%under Conditional Independence Structure
}

\author{Shun Watanabe and Te Sun Han}

% The paper headers
%\markboth{Journal of \LaTeX\ Class Files,~Vol.~6, No.~1, January~2007}%
%{Shell \MakeLowercase{\textit{et al.}}: Bare Demo of IEEEtran.cls for Journals}

\maketitle
\begin{abstract}
The isomorphism problem in the ergodic theory is revisited from the perspective of
information spectrum approach, an approach that has been developed 
to investigate coding problems for non-ergodic random processes in information theory.
It is proved that the information spectrum is invariant under isomorphisms.
This result together with an analysis of information spectrum provide a conceptually simple
proof of the result by \v{S}ujan, which claims that the entropy spectrum is invariant under isomorphisms.
%which can be regarded as a generalization of the classic result, by Kolmogorov and Sinai, 
%claiming that the entropy is invariant under isomorphisms. 
It is also discussed under what circumstances the same information spectrum implies the existence 
of an isomorphism. 
\end{abstract}

%%%%%%%%%%%%%%%%%%%%%%%%%%%%%%%%%%%%%%%%%%%%%%%%%%%%%%%%%%%%%%%%%%%%%%%%%%
\section{Introduction} \label{section:introduction}

In ergodic theory, one of fundamental problems is to identify if two dynamical systems are isomorphic
or not, which is known as the isomorphism problem. Inspired by the Shannon entropy in information theory,
Kolmogorov and Sinai introduced the entropy of dynamical systems and showed that the entropy is an invariant
under isomorphism; in other words, an isomorphism between two dynamical systems exists only if the entropies are equal.
Since then, the entropy has been widely used as an invariant of the isomorphism problem. 
However, the entropy need not be a complete invariant, i.e., an isomorphism may not exist even if the entropies of two 
dynamical systems are equal.
Then, an interesting question is under what circumstances the same entropy implies the existence of an isomorphism. 
A landmark result on this problem was provided by Ornstein in \cite{Ornstein:70} (see also \cite{Ornstein:book}).
He proved that two i.i.d. random processes (Bernoulli shifts) are isomorphic to each other if the entropies are equal;
furthermore, he also characterized the class of processes that are isomorphic to the i.i.d. random processes.
See \cite{Shields-book, Shields:98} for other interactions between information theory and ergodic theory.

In the literature, most studies on the isomorphism problem have focused on ergodic dynamical systems with
some exceptions \cite{KieRah81, Sujan:82, Sujan:82b, TakVer:02}.
In \cite{KieRah81}, Kieffer and Rahe provided a sufficient condition on the existence of isomorphism 
between two non-ergodic mixtures of Bernoulli shifts.
In \cite{Sujan:82, Sujan:82b},  \v{S}ujan provided a necessary condition for the existence of
isomorphism in terms of ``entropy spectra," leveraging the ergodic decomposition. 
In \cite{TakVer:02}, Takens and Verbitskiy showed that the R\'enyi entropy of non-ergodic
dynamical system is given by the essential infimum of the spectrum of entropies of the ergodic decomposition. 

On the other hand, in the 1990s, Han and Verd\'u developed ``information spectrum" approach
in information theory to investigate coding problems for
general non-ergodic sources/channels \cite{han:93} (see also \cite{han:book}). 
Among other things, the key feature of the approach is that coding theorems are proved in two steps.
In the first step, the performance of a coding problem is characterized by 
the probabilistic behavior of self-information or related quantities,
which is termed the information spectrum. This step is proved without invoking
probability theoretic theorems, such as the law of large number or the ergodic theorem. 
Then, in the second step, the probability theoretic theorems are invoked to characterize the
behavior of the information spectrum in terms of information measures such as the entropy. 

The main aim of this paper is to revisit the isomorphism problem 
from the perspective of information spectrum approach. 
More specifically, we prove that the information spectrum is invariant under isomorphisms between 
random processes.\footnote{It should not be confused with the spectral isomorphism of linear operators induced by dynamical systems (eg.~see \cite{Walters:book}).} 
Then, using this result together with an analysis of information spectra,
we provide an alternative proof for the aforementioned result by \v{S}ujan \cite{Sujan:82, Sujan:82b},
which is conceptually and technically much simpler than the argument given in \cite{Sujan:82, Sujan:82b}.\footnote{An advantage of the approach in \cite{Sujan:82, Sujan:82b} is that
it can be applied to random processes with countably infinite alphabet, while we use the finiteness of alphabet in our proof.}

Even though  the information spectrum coincides with the entropy
spectrum under ergodic decomposition, we are intentionally distinguishing the two concepts, ``information spectrum" and ``entropy spectrum."
The former is defined directly for a given random process, and we prove the invariance of information 
spectra without invoking the ergodic decomposition; the ergodic decomposition is only needed to prove that
the information spectrum coincides with the entropy spectrum. On the other hand, the argument in \cite{Sujan:82, Sujan:82b} begins with the ergodic decomposition,
and the invariance of the entropy spectrum is proved via the invariance of entropy in each
ergodic component. 

The rest of the paper is organized as follows. In Section \ref{sec:preliminaries}, we introduce our notation and
review some basic facts in ergodic theory. In Section \ref{section:information-spectrum}, we state our main
results; the proofs are provided in Section \ref{section:proof-explicit-information-spectrum} and Section \ref{section:proof-spectrum-homomorphism}.
In Section \ref{sec:general-dynamical-system}, we discuss how to define the information spectrum for general dynamical systems. 
In Section \ref{section:sufficient-condition}, we discuss under what circumstances the same information spectrum
implies the existence of an isomorphism. 
In Section \ref{section:discussion}, we conclude the paper with some discussion on possible future research directions. 

%%%%%%%%%%%%%%%%%%%%%%%%%%%%%%%%%%%%%%%%%%%%%%%%%%%%%%%%%%%%%%%%%%%%%%%%%%%
%%%%%%%%%%%%%%%%%%%%%%%%%%%%%%%%%%%%%%%%%%%%%%%%%%%%%%%%%%%%%%%%%%%%%%%%%%%
\section{Preliminaries} \label{sec:preliminaries}

In this section, we introduce our notation by reviewing some basic facts in ergodic theory.
Let $(\Omega,\mathscr{B},\mu)$ be a measure space.
A measurable map $T: \Omega \to \Omega$ is called measure-preserving transformation if
$\mu(T^{-1}A) = \mu(A)$ for every $A \in \mathscr{B}$.
The quadruple $(\Omega, \mathscr{B},\mu,T)$ is called a dynamical system.
When $\Omega = {\cal X}^{\mathbb{Z}}$, i.e., 
the set of all doubly infinite sequences
\begin{align*}
\bm{x} = (\ldots,x_{-2},x_{-1},x_0,x_1,x_2,\ldots),
\end{align*}
where each $x_i$ is an element of some finite set ${\cal X}$,
the measure-preserving transformation is given by the shift $S$,\footnote{Since the underlying space can be recognized from
the context, we denote $S$ instead of $S_{\cal X}$ to avoid cumbersome notation.} 
i.e., $(S \bm{x})_i = x_{i+1}$ for $\bm{x} \in {\cal X}^{\mathbb{Z}}$; the measurable set $\mathscr{B}_{\cal X}$
is given by the $\sigma$-algebra generated by cylinder set
\begin{align}
[a_m^n] := \{ \bm{x} \in {\cal X}^{\mathbb{Z}} : x_i = a_i~\forall m \le i \le n \}
\end{align}
for $m,n\in \mathbb{Z}$.
Let us define the random process $\bm{X}=\{X_n \}_{n \in \mathbb{Z}}$ by assigning 
\begin{align}
P_{X_m^n}(a_m^n) = \Pr\big( X_i = a_i : m \le i \le n \big) = \mu([a_m^n])
\end{align}
for $m,n \in \mathbb{Z}$.
Owing to the measure-preserving requirement of $S$, the random process $\bm{X}$ is stationary. 
When $m=1$, we denote $P_{X_m^n}(a_m^n)$ by $P_{X^n}(a^n)$ for $a^n \in {\cal X}^n$.
In this manner, the dynamical system $({\cal X}^{\mathbb{Z}}, \mathscr{B}_{\cal X}, \mu,S)$ can be identified with the random process $\bm{X}$.
Throughout the rest of this paper except Section \ref{sec:general-dynamical-system}, 
we mainly consider the random process $\bm{X}$ determined by $({\cal X}^{\mathbb{Z}}, \mathscr{B}_{\cal X}, \mu,S)$;
we will come back to general dynamical systems in Section \ref{sec:general-dynamical-system}.

One of the most fundamental problems in ergodic theory is to determine if given two processes are ``equivalent" or not.
A commonly used notion of equivalence is defined as follows. 

\begin{definition}
For two stationary random processes $\bm{X}=\{X_n\}_{n\in\mathbb{Z}}$ and $\bm{Y}=\{Y_n\}_{n\in\mathbb{Z}}$ 
determined by $({\cal X}^{\mathbb{Z}}, \mathscr{B}_{\cal X}, \mu,S)$ and $({\cal Y}^{\mathbb{Z}}, \mathscr{B}_{\cal Y},\nu,S)$,
respectively, 
we call a measurable map $\phi: {\cal X}^{\mathbb{Z}} \to {\cal Y}^{\mathbb{Z}}$ a {\em homomorphism}\footnote{Homomorphism is also called {\em factor map} in some literature.}
if $\nu = \phi_* \mu$, i.e., $\nu(B)=\mu(\phi^{-1}(B))$ for every $B \in \mathscr{B}_{\cal Y}$,
and $\phi(Sx)=S\phi(x)$ for almost sure $\bm{x} \in {\cal X}^{\mathbb{Z}}$ under $\mu$.
Furthermore, when there exists a homomorphism $\psi:{\cal Y}^{\mathbb{Z}} \to {\cal X}^{\mathbb{Z}}$ such that
$\psi(\phi(\bm{x}))=\bm{x}$ for almost sure $\bm{x} \in {\cal X}^{\mathbb{Z}}$ under $\mu$ and $\phi(\psi(\bm{y}))=\bm{y}$ for almost sure 
$\bm{y} \in {\cal Y}^\mathbb{Z}$ under $\nu$, then a pair $(\phi,\psi)$ is called an {\em isomorphism}.
When there exists an isomorphism between two stationary random processes,
those processes are said to be {\em isomorphic}. 
\end{definition} 

In order to determine if given two random processes are isomorphic or not, 
one of the most basic criterion is the ergodicity.
\begin{definition}
A random process $\bm{X}=\{X_n\}_{n\in\mathbb{Z}}$ determined by $({\cal X}^{\mathbb{Z}}, \mathscr{B}_{\cal X}, \mu,S)$ is called {\em ergodic} if,
for every $A \in \mathscr{B}_{\cal X}$ with $\mu(A \triangle S^{-1}A) = 0$, it holds that $\mu(A) =0$ or $\mu(A)=1$, where $\triangle$ is the symmetric difference
of sets.
\end{definition}
From the definition, we can readily verify that
ergodicity is an invariant under homomorphism (eg.~see \cite[Example I.2.12]{Shields-book}).\footnote{There may exist a homomorphism from
a non-ergodic process to an ergodic process.} 
\begin{proposition} \label{proposition:invariance-ergodic}
For two stationary random processes $\bm{X}=\{X_n\}_{n\in\mathbb{Z}}$ and $\bm{Y}=\{ Y_n\}_{n\in\mathbb{Z}}$,
suppose that there exists a homomorphism from $\bm{X}$ to $\bm{Y}$.
If $\bm{X}$ is ergodic, then $\bm{Y}$ is also ergodic.
\end{proposition}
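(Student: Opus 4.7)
The plan is to pull back any almost-invariant set on the $\bm{Y}$ side to an almost-invariant set on the $\bm{X}$ side via the homomorphism, and then apply ergodicity of $\bm{X}$ directly. Concretely, given $B \in \mathscr{B}_{\cal Y}$ with $\nu(B \triangle S^{-1}B)=0$, I will set $A := \phi^{-1}(B)$ and show $\mu(A \triangle S^{-1}A)=0$ and $\mu(A)=\nu(B)$; then ergodicity of $\bm{X}$ forces $\mu(A)\in\{0,1\}$, which gives $\nu(B)\in\{0,1\}$.

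The measure identity $\mu(A)=\nu(B)$ is immediate from $\nu = \phi_*\mu$. The invariance identity is the small technical step. From $\phi(S\bm{x})=S\phi(\bm{x})$ a.s.\ under $\mu$, there is a $\mu$-null set $N \in \mathscr{B}_{\cal X}$ off of which $\phi \circ S = S \circ \phi$ pointwise. On $N^c$ this yields the set equality $\phi^{-1}(S^{-1}B) \cap N^c = S^{-1}\phi^{-1}(B) \cap N^c$, so $\phi^{-1}(S^{-1}B)$ and $S^{-1}\phi^{-1}(B)$ differ by a subset of $N$ and hence agree up to $\mu$-measure zero. Therefore
\begin{align*}
A \triangle S^{-1}A \;=\; \phi^{-1}(B) \triangle S^{-1}\phi^{-1}(B) \;\subseteq\; \phi^{-1}\bigl(B \triangle S^{-1}B\bigr) \cup N
\end{align*}
(again up to a $\mu$-null set), and taking $\mu$-measure,
\begin{align*}
\mu(A \triangle S^{-1}A) \;\le\; \mu\bigl(\phi^{-1}(B \triangle S^{-1}B)\bigr) + \mu(N) \;=\; \nu(B \triangle S^{-1}B) \;=\; 0.
\end{align*}

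With $A$ thus shown to satisfy $\mu(A \triangle S^{-1}A)=0$, ergodicity of $\bm{X}$ yields $\mu(A)\in\{0,1\}$, whence $\nu(B)=\mu(A) \in\{0,1\}$, and $\bm{Y}$ is ergodic. The only genuine obstacle is the bookkeeping of the $\mu$-null set coming from the a.s.\ equivariance of $\phi$; once one is careful to restrict to $N^c$ before manipulating preimages, the argument reduces to the elementary identity $\phi^{-1}(B) \triangle \phi^{-1}(B')=\phi^{-1}(B \triangle B')$ combined with the definition of pushforward measure. No probabilistic or ergodic-theoretic tool beyond the definitions is needed.
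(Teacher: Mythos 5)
Your proof is correct and is exactly the standard pullback argument that the paper omits (it only cites \cite[Example I.2.12]{Shields-book} and calls the claim readily verifiable): pull back an almost-invariant set through $\phi$, use the $\mu$-a.s.\ equivariance to control the discrepancy between $S^{-1}\phi^{-1}(B)$ and $\phi^{-1}(S^{-1}B)$ by a null set, and invoke $\nu=\phi_*\mu$. Your handling of the null set $N$ is careful and complete, so nothing further is needed.
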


Proposition \ref{proposition:invariance-ergodic} tells us that two random processes cannot be
isomorphic if one is ergodic and the other is non-ergodic. When both processes are ergodic, 
a more quantitative invariant is needed.
\begin{definition}
For a stationary random process $\bm{X}=\{X_n\}_{n\in \mathbb{Z}}$, the entropy rate is defined by
\begin{align*}
H(\bm{X}) 
:= \lim_{n\to\infty} \frac{1}{n} H(X_1,\ldots,X_n),
\end{align*}
where 
\begin{align*}
H(X_1,\ldots,X_n) := \sum_{x^n \in {\cal X}^n} P_{X^n}(x^n) \log \frac{1}{P_{X^n}(x^n)}.
\end{align*}
\end{definition}

One of the fundamental results in ergodic theory is the following.

\begin{proposition}[Homomorphic monotonicity of entropy \cite{Kolmogorov:58,Sinai:59}] \label{proposition:Kolmogorov-Sinai}
For two stationary random processes $\bm{X}=\{X_n\}_{n\in\mathbb{Z}}$ and $\bm{Y}=\{ Y_n\}_{n\in\mathbb{Z}}$,
if a homomorphism from $\bm{X}$ to $\bm{Y}$ exists, then it holds that
\begin{align*}
H(\bm{X}) \ge H(\bm{Y}).
\end{align*}
\end{proposition}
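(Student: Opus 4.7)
My plan is to reduce the infinite-range measurable coding $\phi$ to a finite-range (sliding-block) approximation so that the data-processing inequality can be applied at the level of $n$-blocks, and then pass to the limit.

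\medskip

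\textbf{Step 1 (Finitary approximation of $\phi$).} Let $\phi_0:\mathcal{X}^{\mathbb{Z}}\to\mathcal{Y}$ denote the zeroth-coordinate output $\phi_0(\bm{x}):=(\phi(\bm{x}))_0$. Because $\mathcal{Y}$ is finite, $\phi_0^{-1}(b)\in\mathscr{B}_{\mathcal{X}}$ for each $b\in\mathcal{Y}$, and each of these sets is measurable with respect to the $\sigma$-algebra generated by cylinder sets. By the standard approximation of measurable sets by sets in the generating algebra, for any $\varepsilon>0$ there exist a window radius $k$ and a function $\tilde{\phi}_0:\mathcal{X}^{2k+1}\to\mathcal{Y}$ such that
\begin{equation*}
\mu\bigl(\{\bm{x}:\phi_0(\bm{x})\neq \tilde{\phi}_0(x_{-k},\dots,x_{k})\}\bigr)<\varepsilon.
\end{equation*}

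\medskip

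\textbf{Step 2 (Block reduction via shift-commutativity).} Since $\phi\circ S=S\circ \phi$ almost surely, we have $Y_i=\phi_0(S^i\bm{X})$ a.s. Define the sliding-block approximation $\tilde{Y}_i:=\tilde{\phi}_0(X_{i-k},\dots,X_{i+k})$. Stationarity of $\bm{X}$ yields $\Pr(Y_i\neq \tilde{Y}_i)<\varepsilon$ for every $i$. Then, using subadditivity of joint entropy together with Fano's inequality coordinate-wise,
\begin{equation*}
H(Y_1,\dots,Y_n)\le H(\tilde{Y}_1,\dots,\tilde{Y}_n)+\sum_{i=1}^n H(Y_i\mid \tilde{Y}_i)\le H(\tilde{Y}_1,\dots,\tilde{Y}_n)+n\bigl[h(\varepsilon)+\varepsilon\log|\mathcal{Y}|\bigr],
\end{equation*}
where $h(\cdot)$ is the binary entropy. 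Because $(\tilde{Y}_1,\dots,\tilde{Y}_n)$ is a deterministic function of $(X_{1-k},\dots,X_{n+k})$, the data-processing inequality and stationarity give
\begin{equation*}
H(\tilde{Y}_1,\dots,\tilde{Y}_n)\le H(X_{1-k},\dots,X_{n+k})=H(X_1,\dots,X_{n+2k}).
\end{equation*}

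\medskip

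\textbf{Step 3 (Passage to entropy rate).} Dividing by $n$ and taking $n\to\infty$ with $k$ fixed, the $2k$ extra coordinates contribute nothing in the limit, yielding
\begin{equation*}
H(\bm{Y})\le H(\bm{X})+h(\varepsilon)+\varepsilon\log|\mathcal{Y}|.
\end{equation*}
Letting $\varepsilon\downarrow 0$ gives $H(\bm{Y})\le H(\bm{X})$, as required.

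\medskip

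The main obstacle is Step 1: although the approximation of a measurable set by cylinders is standard, one must be careful to convert the approximations of the preimages $\phi_0^{-1}(b)$, $b\in\mathcal{Y}$, into a single well-defined function $\tilde{\phi}_0$ of finitely many coordinates; this is done by fixing a common window $k$ (the maximum of finitely many per-symbol windows) and arbitrating conflicts or gaps on a set of total measure less than $\varepsilon$. Once this is in place, the rest is routine Fano-type bookkeeping plus stationarity.
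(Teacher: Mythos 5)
Your proof is correct, and it is the classical direct argument rather than the route the paper takes. The paper states this proposition as a cited classical result and then re-derives it as a consequence of its information-spectrum machinery: by Theorem \ref{theorem:explicit-information-spectrum} and the ergodic decomposition of the entropy rate one writes $H(\bm{X})=\int_0^\infty(1-F_{\bm{X}}(\tau))\,d\tau$, and the inequality then follows by integrating the spectrum dominance $F_{\bm{X}}(\tau)\le F_{\bm{Y}}(\tau)$ of Theorem \ref{theorem:spectrum-homomorphism}. You instead work directly with Shannon entropies: your Step 1 is exactly the paper's Lemma \ref{lemma:approximation-homomorphism} (finite sliding-block approximation of the factor map), but where the paper's proof of Theorem \ref{theorem:spectrum-homomorphism} then controls probabilities of the normalized self-information via a Hamming-ball counting bound, you use Fano's inequality coordinate-wise plus data processing and pass to the entropy rate. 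What your route buys is elementarity: no ergodic decomposition, no AEP, no ergodic theorem --- only stationarity, subadditivity, Fano, and the cylinder-set approximation. What it gives up is strength: it yields only the scalar inequality $H(\bm{Y})\le H(\bm{X})$, whereas the paper's path establishes the pointwise dominance of the entire information spectrum, of which the entropy inequality is a one-line corollary. The only points needing minor care in your write-up are the ones you already flag (assembling the per-symbol cylinder approximations into a single finite coder, which is standard) and the implicit restriction $\varepsilon\le 1/2$ so that $h(\Pr(Y_i\neq\tilde{Y}_i))\le h(\varepsilon)$; neither is a gap.
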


\begin{corollary}[Isomorphic invariance of entropy]
If two stationary random processes $\bm{X}=\{X_n\}_{n\in\mathbb{Z}}$ and $\bm{Y}=\{ Y_n\}_{n\in\mathbb{Z}}$ are isomorphic, then it holds that
\begin{align*}
H(\bm{X}) = H(\bm{Y}).
\end{align*}
\end{corollary}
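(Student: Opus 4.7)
The statement is a direct corollary of Proposition \ref{proposition:Kolmogorov-Sinai}, so the plan is just to unpack the definition of isomorphism and apply the homomorphic monotonicity of entropy in both directions. No additional machinery is needed.

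First, I would invoke the definition of isomorphism: by assumption there exists a pair $(\phi,\psi)$ with $\phi:{\cal X}^{\mathbb{Z}}\to{\cal Y}^{\mathbb{Z}}$ and $\psi:{\cal Y}^{\mathbb{Z}}\to{\cal X}^{\mathbb{Z}}$ such that both are homomorphisms in the sense of the Definition (shift-equivariance and pushforward of the measure), and they are mutual almost-sure inverses. In particular, $\phi$ by itself is a homomorphism from $\bm{X}$ to $\bm{Y}$, and $\psi$ by itself is a homomorphism from $\bm{Y}$ to $\bm{X}$.

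Next, I would apply Proposition \ref{proposition:Kolmogorov-Sinai} twice. The existence of the homomorphism $\phi$ yields
\begin{align*}
H(\bm{X}) \ge H(\bm{Y}),
\end{align*}
while the existence of the homomorphism $\psi$ yields
\begin{align*}
H(\bm{Y}) \ge H(\bm{X}).
\end{align*}
Combining the two inequalities gives $H(\bm{X}) = H(\bm{Y})$, which completes the proof.

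There is essentially no obstacle: all of the nontrivial content is carried by Proposition \ref{proposition:Kolmogorov-Sinai}, and the corollary is just the observation that an isomorphism packages two homomorphisms into a single object, so monotonicity in both directions forces equality. The only thing worth double-checking is that the two ``half'' maps $\phi$ and $\psi$ individually satisfy the hypotheses of Proposition \ref{proposition:Kolmogorov-Sinai}, which is immediate from the definition.
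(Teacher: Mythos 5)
Your proof is correct and is exactly the argument the paper intends: the corollary follows by applying Proposition \ref{proposition:Kolmogorov-Sinai} to each of the two homomorphisms $\phi$ and $\psi$ constituting the isomorphism, yielding the two opposite inequalities and hence equality. Nothing further is needed.
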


The entropy has been the most widely used invariant to determine if
two random processes are isomorphic or not. In fact, when two random processes are independently identically distributed (i.i.d.) processes,
i.e., Bernoulli shifts, then Ornstein proved that the entropy is the complete invariant, i.e.,
the two processes are isomorphic if and only if their entropies are the same \cite{Ornstein:70}. 
  
%%%%%%%%%%%%%%%%%%%%%%%%%%%%%%%%%%%%%%%%%%%%%%%%%%%%%%%%%%%%%%%%%%%%%%%%%%%%
%%%%%%%%%%%%%%%%%%%%%%%%%%%%%%%%%%%%%%%%%%%%%%%%%%%%%%%%%%%%%%%%%%%%%%%%%%%%  
\section{Invariance of Information Spectrum} \label{section:information-spectrum}

Let us introduce the information spectrum of a random process \cite{han:book}.

\begin{definition} \label{definition:information-spectrum}
For a stationary random process $\bm{X}=\{X_n \}_{n\in\mathbb{Z}}$, the information spectrum 
is the cumulative distribution function of the normalized self-information defined by
\begin{align*}
F_{\bm{X}}(\tau) := \lim_{\gamma \downarrow 0} \limsup_{n\to\infty} \Pr\bigg( \frac{1}{n} \log \frac{1}{P_{X^n}(X^n)} \le \tau + \gamma \bigg)
\end{align*}
for $\tau \in \mathbb{R}^+ := \{ a \in \mathbb{R}: a \ge 0 \}$.
\end{definition}

By the definition, $F_{\bm{X}}(\tau)$ is right-continuous. Since
\begin{align*}
 \Pr\bigg( \frac{1}{n} \log \frac{1}{P_{X^n}(X^n)} > \log|{\cal X}|+\gamma \bigg) &=  \sum_{x^n \in {\cal X}^n} P_{X^n}(x^n) \bol{1}\bigg[ P_{X^n}(x^n) < \frac{2^{-n\gamma}}{|{\cal X}|^n}\bigg] \\
&\le 2^{-n\gamma}
\end{align*}
for any $\gamma>0$, it follows that $F_{\bm{X}}(\tau) = 1$ for $\tau \ge \log |{\cal X}|$.

When a random process $\bm{X}$ is ergodic, the asymptotic equipartition property guarantees 
\begin{align*}
\lim_{n \to \infty} \Pr\bigg( \bigg| \frac{1}{n} \log \frac{1}{P_{X^n}(X^n)} - H(\bm{X}) \bigg| \le \gamma \bigg) = 1
\end{align*}
for any $\gamma>0$. Thus, the information spectrum of the ergodic process is given as
\begin{align*}
F_{\bm{X}}(\tau) = \bol{1}\big[ H(\bm{X}) \le \tau \big],
\end{align*}
where $\bol{1}[\cdot]$ is the indicator function.

When a random process is not ergodic, the information spectrum can be computed
based on the entropy spectrum of the ergodic decomposition of the process as follows. The proof will be given in Section \ref{section:proof-explicit-information-spectrum}.
\begin{theorem} \label{theorem:explicit-information-spectrum}
When the ergodic decomposition of a stationary process $\bm{X}=\{X_n \}_{n\in\mathbb{Z}}$ is given as
\begin{align*}
P_{X^n}(x^n) = \int_\Theta P_{X_\theta^n}(x^n) dw(\theta)
\end{align*}
for a family of ergodic processes $\{ \bm{X}_\theta \}_{\theta \in \Theta}$ with measure $w$ on $\Theta$,
the information spectrum of the process is given as
\begin{align} \label{eq:explicit-formula}
F_{\bm{X}}(\tau) = w(\{ \theta : H(\bm{X}_\theta) \le \tau \} ).
\end{align}
\end{theorem}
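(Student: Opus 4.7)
The plan is to show that the normalized self-information $Z_n := -\frac{1}{n}\log P_{X^n}(X^n)$ converges in distribution, under $\mu$, to a random variable whose CDF is $G(\tau) := w(\{\theta : H(\bm{X}_\theta) \le \tau\})$, and then to translate this into the $\lim_{\gamma\downarrow 0}\limsup_n$ form defining $F_{\bm{X}}(\tau)$. Using the ergodic decomposition I would first write
\begin{equation*}
\Pr(Z_n \le \tau+\gamma) = \int_\Theta P_{X_\theta^n}\bigg( \bigg\{x^n : -\frac{1}{n}\log P_{X^n}(x^n) \le \tau+\gamma \bigg\} \bigg)\, dw(\theta),
\end{equation*}
reducing the problem to the behaviour of $-\frac{1}{n}\log P_{X^n}$ under each ergodic component measure $P_{X_\theta}$.

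The technical heart of the argument is to show that, for $w$-almost every $\theta$,
\begin{equation*}
-\frac{1}{n}\log P_{X^n}(X^n) \longrightarrow H(\bm{X}_\theta) \quad \mbox{in } P_{X_\theta}\mbox{-probability}.
\end{equation*}
The AEP applied to the ergodic component $\bm{X}_\theta$ gives $-\frac{1}{n}\log P_{X_\theta^n}(X^n) \to H(\bm{X}_\theta)$ in $P_{X_\theta}$-probability, so it remains to verify that $\frac{1}{n}\log(P_{X_\theta^n}(X^n)/P_{X^n}(X^n)) \to 0$ in the same sense. The trivial estimate $P_{X^n}(x^n) \le 1$ handles one direction; the other direction uses the mutual singularity of distinct ergodic components, together with a subexponential estimate showing that, on the $\bm{X}_\theta$-typical set, the mixture integral $\int P_{X_{\theta'}^n}(x^n)\, dw(\theta')$ is dominated, up to a factor $2^{-n\epsilon_n}$ with $\epsilon_n \to 0$, by the contribution from a $w$-neighbourhood of $\theta$. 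Equivalently, one may invoke the Shannon--McMillan--Breiman theorem for stationary (possibly non-ergodic) finite-alphabet processes and then identify the invariant a.s.\ limit with $H(\bm{X}_\theta)$ on each ergodic component. This subexponential equivalence between $P_{X^n}$ and $P_{X_\theta^n}$ on typical sequences is where the finiteness of ${\cal X}$ enters in an essential way, and it is the main obstacle of the proof.

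With the previous step in hand, the spectrum formula follows by assembling the pieces. The integrand in the decomposition above tends to $1$ for $\theta$ with $H(\bm{X}_\theta) < \tau+\gamma$ and to $0$ for $\theta$ with $H(\bm{X}_\theta) > \tau+\gamma$, and is uniformly bounded by $1$, so dominated convergence yields
\begin{equation*}
w(\{\theta : H(\bm{X}_\theta) < \tau+\gamma\}) \le \liminf_n \Pr(Z_n \le \tau+\gamma) \le \limsup_n \Pr(Z_n \le \tau+\gamma) \le w(\{\theta : H(\bm{X}_\theta) \le \tau+\gamma\}).
\end{equation*}
Taking $\gamma\downarrow 0$ and using right-continuity of $G$ together with the monotone-limit identity $\{H(\bm{X}_\theta) < \tau+\gamma\} \downarrow \{H(\bm{X}_\theta) \le \tau\}$ collapses both bounds to $G(\tau)$, which is exactly the claimed equality $F_{\bm{X}}(\tau) = w(\{\theta : H(\bm{X}_\theta) \le \tau\})$.
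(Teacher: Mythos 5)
Your overall architecture matches the paper's: decompose $\Pr(Z_n\le\tau+\gamma)$ over the ergodic components, show that under each $P_{X_\theta}$ the normalized self-information of the \emph{mixture} concentrates at $H(\bm{X}_\theta)$, and reassemble via bounded convergence and right-continuity. The final assembly step (the sandwich between $w(\{H(\bm{X}_\theta)<\tau+\gamma\})$ and $w(\{H(\bm{X}_\theta)\le\tau+\gamma\})$, then $\gamma\downarrow 0$) is correct and is essentially what the paper does with Fatou. The problem is the step you yourself flag as ``the main obstacle'': you do not actually prove it. Two specific issues. First, the claim that ``$P_{X^n}(x^n)\le 1$ handles one direction'' is not right; the easy direction is the bound $\Pr\big(\tfrac{1}{n}\log\tfrac{P_{X_\theta^n}(X_\theta^n)}{P_{X^n}(X_\theta^n)}\le-\gamma\big)\le 2^{-n\gamma}$, which comes from a change of measure from $P_{X_\theta^n}$ to $P_{X^n}$ plus Markov's inequality, not from $P_{X^n}\le 1$. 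Second, and more seriously, for the hard direction you need a \emph{lower} bound $P_{X^n}(x^n)\ge 2^{-n\gamma}P_{X_\theta^n}(x^n)$ on $\bm{X}_\theta$-typical sequences, for all $\theta$ outside a $w$-small set. Your proposed mechanism --- that the mixture integral is ``dominated by the contribution from a $w$-neighbourhood of $\theta$'' --- is the wrong inequality direction (domination by a neighbourhood is an upper-bound statement), and in any case $\Theta$ carries no topology and no continuity of $\theta'\mapsto P_{X_{\theta'}^n}(x^n)$ is available, so ``nearby components assign comparable probability'' has no content as stated. A naive Markov-inequality argument gives, for each fixed $x_1^n$, a bad set of $\theta$ of measure $2^{-\sqrt{n}}$, but there are exponentially many sequences, so the union bound fails; this is exactly the obstruction your sketch does not overcome.

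The paper's resolution is a concrete combinatorial device you are missing: approximate each component by its $k$th-order Markov version $P^{(k)}_{X_{\theta,k+1}^n|X_{\theta,1}^k}$, which is constant on $k$th-order Markov type classes. The bad set of $\theta$ then depends only on the type, and since there are only polynomially many types ($L_{n,k}=(n-k+1)^{|{\cal X}|^{k+1}}$), the union bound over types yields a single set $\Phi_n^*$ with $w(\Phi_n^*)\ge 1-L_{n,k}2^{-\sqrt{n}}$ on which the mixture Markov approximation is within $2^{\sqrt{n}}$ of the component's; the ergodic theorem applied to the $k$-block functional and the convergence $H(X_{\theta,k+1}|X_{\theta,1}^k)\to H(\bm{X}_\theta)$ then finish the argument. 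Your fallback --- invoking the Shannon--McMillan--Breiman theorem for general (non-ergodic) stationary finite-alphabet processes and identifying the invariant limit with $H(\bm{X}_\theta)$ on each component --- would give a correct and shorter proof, but it imports wholesale the very concentration statement whose proof contains the same difficulty, and it runs against the paper's stated aim of keeping the information-spectrum manipulations separate from heavy ergodic-theoretic machinery. If you want a self-contained proof, you need something like the Markov-type argument; if you are content to cite the general SMB theorem, say so explicitly and drop the neighbourhood heuristic, which does not work.
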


Let $\overline{H}(\bm{X})$ and $\underline{H}(\bm{X})$ be defined as
\begin{align*}
\overline{H}(\bm{X}) &:= \inf\bigg\{ \tau : \lim_{n\to\infty} \Pr\bigg( \frac{1}{n} \log \frac{1}{P_{X^n}(X^n)} > \tau \bigg) = 0  \bigg\}, \\
\underline{H}(\bm{X}) &:= \sup\bigg\{ \tau : \lim_{n\to\infty} \Pr\bigg( \frac{1}{n} \log \frac{1}{P_{X^n}(X^n)} < \tau \bigg) = 0  \bigg\},
\end{align*}
which are called the spectral sup-entropy and spectral inf-entropy \cite{han:book}.\footnote{$\overline{H}(\bm{X})$ and $\underline{H}(\bm{X})$
are given by the essential supremum and the essential infimum of the entropy spectrum under the ergodic decomposition. 
The essential supremum of the entropy spectrum was used in \cite{Winkelbauer:70} to characterize the limit of 
the source coding for non-ergodic processes. } 
Then, we have $F_{\bm{X}}(\tau) = 1$ for $\tau \ge \overline{H}(\bm{X})$
and $F_{\bm{X}}(\tau) = 0$ for $\tau < \underline{H}(\bm{X})$.
When a process is decomposed into a finite number of ergodic components, 
a behavior of the information spectrum is described in Fig.~\ref{Fig:discrete}.

%%%
\begin{figure}[t]
\centering{
\includegraphics[width=0.45\textwidth]{./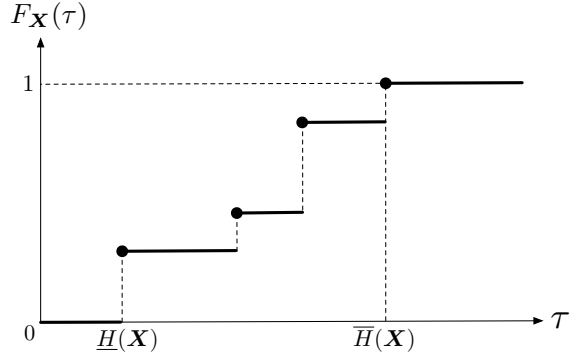}
\caption{A behavior of the information spectrum when a process is decomposed into a finite number of ergodic components.}
\label{Fig:discrete}
}
\end{figure}
%%%

\begin{remark}
If the information spectrum $F_\bm{X}(\tau)$ is defined without the slack parameter $\gamma\downarrow 0$ in Definition \ref{definition:information-spectrum}, it may not
be right-continuous in general. For instance, when $\bm{X}$ is an i.i.d. process, the law of large number and the central limit theorem imply
\begin{align*}
\lim_{n\to\infty} \Pr\bigg( \frac{1}{n} \log \frac{1}{P_{X^n}(X^n)} \le \tau  \bigg) 
= \left\{
\begin{array}{ll}
0 & \tau < H(\bm{X}) \\
1/2 & \tau = H(\bm{X}) \\
1 & \tau > H(\bm{X})
\end{array}
\right..
\end{align*}
\end{remark}

As an information spectrum counterpart of Proposition \ref{proposition:Kolmogorov-Sinai}, we have the following theorem, 
which will be proved in Section \ref{section:proof-spectrum-homomorphism}.

\begin{theorem}[Homomorphic monotonicity of information spectrum] \label{theorem:spectrum-homomorphism}
For two stationary random processes $\bm{X}=\{X_n\}_{n\in\mathbb{Z}}$ and $\bm{Y}=\{ Y_n\}_{n\in\mathbb{Z}}$,
if a homomorphism from $\bm{X}$ to $\bm{Y}$ exists, then it holds that
\begin{align*}
F_{\bm{X}}(\tau) \le F_{\bm{Y}}(\tau)
\end{align*}
for every $\tau \in \mathbb{R}^+$.
\end{theorem}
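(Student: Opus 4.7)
The plan is to mimic the classical Kolmogorov--Sinai proof that entropy is homomorphism-monotone (finite-window approximation of the factor map, plus a cardinality count) but to do the count at the level of individual sequences rather than averaged entropies, so that it yields a distribution-function inequality.

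First, since $\phi$ commutes with the shift, it is determined by its zero-coordinate function $f := \pi_0 \circ \phi \colon {\cal X}^{\mathbb{Z}} \to {\cal Y}$ via $\phi(\bm{x})_i = f(S^i \bm{x})$. Because ${\cal Y}$ is finite and cylinder sets generate $\mathscr{B}_{\cal X}$, standard measure theory yields, for any $\epsilon>0$, an integer $w$ and a map $f_w \colon {\cal X}^{2w+1} \to {\cal Y}$ with $\Pr(f(\bm{X}) \ne f_w(X_{-w}^{w})) \le \epsilon$. Setting $\tilde{Y}_i := f_w(X_{i-w}^{i+w})$, stationarity gives $\Pr(\tilde{Y}_i \ne Y_i) \le \epsilon$, so the Hamming disagreement $D_n := |\{i\in[1,n] : \tilde{Y}_i \ne Y_i\}|$ satisfies $E[D_n] \le n\epsilon$ and thus $\Pr(D_n > \delta n) \le \epsilon/\delta$ by Markov's inequality. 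Crucially, $\tilde{Y}^n$ is a deterministic function of $X_{1-w}^{n+w}$.

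Next comes the counting step. Fix $\tau \ge 0$ and $\gamma>0$, and let
\begin{align*}
A_n := \bigl\{ x_{1-w}^{n+w} : \tfrac{1}{n+2w}\log\tfrac{1}{P_{X_{1-w}^{n+w}}(x_{1-w}^{n+w})} \le \tau + \tfrac{\gamma}{4}\bigr\}.
\end{align*}
Then $|A_n| \le 2^{(n+2w)(\tau+\gamma/4)}$, so the image $f_w^{\otimes n}(A_n)$ has at most this many elements. Let $C_n$ be the set of $y^n \in {\cal Y}^n$ within Hamming distance $\delta n$ of this image; using $\binom{n}{\le \delta n}|{\cal Y}|^{\delta n} \le 2^{n(h(\delta)+\delta\log|{\cal Y}|)}$, we get $|C_n| \le 2^{n(\tau+\gamma/2)}$ provided $\delta$ is chosen so that $h(\delta)+\delta\log|{\cal Y}| < \gamma/8$ and $n$ is large enough that the $O(w/n)$ boundary terms are absorbed. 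On the joint event $\{X_{1-w}^{n+w}\in A_n\}\cap\{D_n \le \delta n\}$ we have $Y^n \in C_n$, hence
\begin{align*}
\Pr(Y^n \in C_n) \ge \Pr(X_{1-w}^{n+w} \in A_n) - \epsilon/\delta.
\end{align*}

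Finally, split $C_n$ into $C_n^{\rm hi} := \{y^n \in C_n : P_{Y^n}(y^n) \ge 2^{-n(\tau+\gamma)}\}$ and its complement in $C_n$; the low-probability piece contributes at most $|C_n|\cdot 2^{-n(\tau+\gamma)} \le 2^{-n\gamma/2} \to 0$, and on $C_n^{\rm hi}$ we have $\tfrac{1}{n}\log\tfrac{1}{P_{Y^n}(Y^n)} \le \tau+\gamma$. Combining,
\begin{align*}
\Pr\bigl(\tfrac{1}{n}\log\tfrac{1}{P_{Y^n}(Y^n)} \le \tau+\gamma\bigr) \ge \Pr\bigl(\tfrac{1}{n+2w}\log\tfrac{1}{P_{X^{n+2w}}(X^{n+2w})} \le \tau + \tfrac{\gamma}{4}\bigr) - \epsilon/\delta - 2^{-n\gamma/2},
\end{align*}
using the stationarity of $\bm{X}$. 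Taking $\limsup_n$, then $\epsilon\downarrow 0$ with $\delta\to 0$ slowly enough that $\epsilon/\delta\to 0$, and finally $\gamma\downarrow 0$ (invoking the $\gamma\downarrow 0$ slack built into Definition \ref{definition:information-spectrum} on both sides), yields $F_{\bm{X}}(\tau)\le F_{\bm{Y}}(\tau)$. The main obstacle is bookkeeping: orchestrating the window size $w$, the approximation error $\epsilon$, the Hamming fraction $\delta$, and the spectrum slack $\gamma$ so that the cardinality bound on $C_n$ stays strictly below $2^{n(\tau+\gamma)}$ while the probability defects $\epsilon/\delta$ and the boundary-induced $O(w/n)$ terms all vanish in the correct order.
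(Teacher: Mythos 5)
Your proposal is correct and follows essentially the same route as the paper's proof: a sliding-block (finite-window) approximation of the homomorphism as in Lemma \ref{lemma:approximation-homomorphism}, Markov-inequality control of the Hamming distortion, a Hamming-ball counting bound of the form $2^{N(h(\beta)+\beta\log|{\cal Y}|)}$, and the same ordering of the limits in $n$, $\varepsilon$, and $\gamma$. The only difference is cosmetic: the paper phrases the counting step as a change-of-measure bound on the coupled event $\{\tfrac{1}{N}\log\tfrac{1}{P_{X_{-m}^m}(X_{-m}^m)}\le \tfrac{1}{N}\log\tfrac{1}{P_{Y_{-n}^n}(Y_{-n}^n)}-\gamma\}$ in Lemma \ref{lemma:finite-bound}, whereas you bound the cardinality of the Hamming blow-up of the image of the high-probability set $A_n$ and discard its low-$P_{Y^n}$ part; these are the same estimate read in opposite directions.
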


Theorem \ref{theorem:spectrum-homomorphism} says that a necessary condition for the existence of
a homomorphism is that the spectrum of $\bm{X}$ ``dominates" the spectrum of $\bm{Y}$ (cf.~Fig.~\ref{Fig:dominance}).
Simpler necessary conditions are
\begin{align*}
\overline{H}(\bm{Y}) \le \overline{H}(\bm{X})
\end{align*}
and
\begin{align*}
\underline{H}(\bm{Y}) \le \underline{H}(\bm{X}).
\end{align*}

%%%
\begin{figure}[t]
\centering{
\includegraphics[width=0.45\textwidth]{./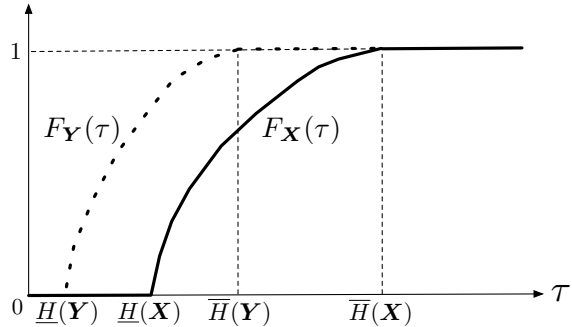}
\caption{Behaviors of the information spectrum of $\bm{X}$ (solid curve) and the information spectrum of $\bm{Y}$ (dashed curve) when there exists a homomorphism from $\bm{X}$ to $\bm{Y}$.}
\label{Fig:dominance}
}
\end{figure}
%%%

\begin{corollary}[Isomorphic invariance of information spectrum] \label{corollary-invariance-information-spectrum}
If two stationary random processes $\bm{X}=\{X_n\}_{n\in\mathbb{Z}}$ and $\bm{Y}=\{ Y_n\}_{n\in\mathbb{Z}}$ are isomorphic, then it holds that
\begin{align*}
F_{\bm{X}}(\tau) = F_{\bm{Y}}(\tau)
\end{align*}
for every $\tau \in \mathbb{R}^+$.
\end{corollary}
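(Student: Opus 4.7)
The plan is to derive the corollary as an immediate consequence of Theorem \ref{theorem:spectrum-homomorphism}, by applying the homomorphic monotonicity in both directions. An isomorphism is by definition a pair $(\phi,\psi)$ in which $\phi:{\cal X}^{\mathbb{Z}}\to{\cal Y}^{\mathbb{Z}}$ is a homomorphism from $\bm{X}$ to $\bm{Y}$ and $\psi:{\cal Y}^{\mathbb{Z}}\to{\cal X}^{\mathbb{Z}}$ is a homomorphism from $\bm{Y}$ to $\bm{X}$, with $\psi\circ\phi=\mathrm{id}$ and $\phi\circ\psi=\mathrm{id}$ almost surely. Hence both directions of the homomorphism hypothesis are in force.

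Applying Theorem \ref{theorem:spectrum-homomorphism} to the homomorphism $\phi$ yields
\begin{equation*}
F_{\bm{X}}(\tau)\le F_{\bm{Y}}(\tau)\quad\text{for every }\tau\in\mathbb{R}^+,
\end{equation*}
and applying the same theorem to the homomorphism $\psi$ yields
\begin{equation*}
F_{\bm{Y}}(\tau)\le F_{\bm{X}}(\tau)\quad\text{for every }\tau\in\mathbb{R}^+.
\end{equation*}
Combining these two inequalities gives $F_{\bm{X}}(\tau)=F_{\bm{Y}}(\tau)$ for all $\tau\in\mathbb{R}^+$, which is the desired conclusion. I would also note, as a sanity check, that no use of the almost-sure two-sided inverse relations $\psi\circ\phi=\mathrm{id}$ and $\phi\circ\psi=\mathrm{id}$ is needed beyond what is already implicit in ensuring that $\psi$ really is a measure-preserving, shift-equivariant map from $({\cal Y}^{\mathbb{Z}},\nu,S)$ to $({\cal X}^{\mathbb{Z}},\mu,S)$.

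There is no real obstacle here: once Theorem \ref{theorem:spectrum-homomorphism} is in hand, the corollary is a two-line symmetric consequence. All of the genuine work lies in the proof of the homomorphic monotonicity of the information spectrum (Theorem \ref{theorem:spectrum-homomorphism}), not in passing from it to this invariance statement.
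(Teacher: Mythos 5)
Your proof is correct and is exactly the argument the paper intends: the corollary follows by applying Theorem \ref{theorem:spectrum-homomorphism} to each of the two homomorphisms $\phi$ and $\psi$ constituting the isomorphism, yielding the two opposite inequalities and hence equality. Your remark that the almost-sure inverse relations play no further role beyond guaranteeing that $\psi$ is itself a homomorphism from $\bm{Y}$ to $\bm{X}$ is also accurate.
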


By combining Corollary \ref{corollary-invariance-information-spectrum} with Theorem \ref{theorem:explicit-information-spectrum}, we can recover the following result by \v{S}ujan.
\begin{corollary}[\cite{Sujan:82b}]
Suppose that a stationary random processes $\bm{X}=\{X_n\}_{n\in\mathbb{Z}}$ is decomposed as
\begin{align*}
P_{X^n}(x^n) = \int_\Theta P_{X_\theta^n}(x^n) dw(\theta)
\end{align*}
for a family of ergodic processes $\{ \bm{X}_\theta \}_{\theta \in \Theta}$ with measure $w$ on $\Theta$, and
a stationary random processes $\bm{Y}=\{ Y_n\}_{n\in\mathbb{Z}}$ is decomposed as
\begin{align*}
P_{Y^n}(y^n) = \int_\Xi P_{Y_\xi^n}(y^n) dv(\xi)
\end{align*}
for a family of ergodic processes $\{ \bm{Y}_\xi \}_{\xi \in \Xi}$ with measure $v$ on $\Xi$.
If the two stationary random processes $\bm{X}$ and $\bm{Y}$ are isomorphic, then it holds that
\begin{align*}
w(\{ \theta : H(\bm{X}_\theta) \le \tau \} ) = v(\{ \xi : H(\bm{Y}_\xi) \le \tau \})
\end{align*}
for every $\tau \in \mathbb{R}_+$.
\end{corollary}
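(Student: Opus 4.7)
The plan is to prove this corollary by directly combining the two results the paper has already established: the isomorphic invariance of the information spectrum (Corollary \ref{corollary-invariance-information-spectrum}) and the explicit description of the information spectrum in terms of the ergodic decomposition (Theorem \ref{theorem:explicit-information-spectrum}). There is essentially no probabilistic or dynamical work left to do; the corollary is a syntactic consequence of the two statements above, and the entire point is that the information-spectrum framework has already shouldered the substantive burden.

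Concretely, I proceed in three short steps. First, since $\bm{X}$ and $\bm{Y}$ are isomorphic, Corollary \ref{corollary-invariance-information-spectrum} gives $F_{\bm{X}}(\tau) = F_{\bm{Y}}(\tau)$ for every $\tau \in \mathbb{R}^+$. Second, applying Theorem \ref{theorem:explicit-information-spectrum} to the hypothesized ergodic decomposition of $\bm{X}$ yields
\begin{align*}
F_{\bm{X}}(\tau) = w(\{ \theta \in \Theta : H(\bm{X}_\theta) \le \tau \}),
\end{align*}
and similarly applying it to the ergodic decomposition of $\bm{Y}$ yields
\begin{align*}
F_{\bm{Y}}(\tau) = v(\{ \xi \in \Xi : H(\bm{Y}_\xi) \le \tau \}).
\end{align*}
Third, equating the two right-hand sides via the equality of spectra from the first step produces the claimed identity
\begin{align*}
w(\{ \theta : H(\bm{X}_\theta) \le \tau \}) = v(\{ \xi : H(\bm{Y}_\xi) \le \tau \})
\end{align*}
for every $\tau \in \mathbb{R}^+$.

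There is no genuine obstacle here, since all the heavy lifting has been done in proving Theorem \ref{theorem:explicit-information-spectrum} and Theorem \ref{theorem:spectrum-homomorphism}. The only mild point worth flagging is that the equality is asserted on all of $\mathbb{R}^+$, not merely on a dense subset or the continuity points of the distribution functions; this is automatic because both $F_{\bm{X}}$ and $F_{\bm{Y}}$ are defined with the slack parameter $\gamma \downarrow 0$ in Definition \ref{definition:information-spectrum}, and hence the preceding corollary already delivers equality at every $\tau$. I would emphasize in the write-up that this is precisely the conceptual payoff advertised in the introduction: \v{S}ujan's invariance of the entropy spectrum is recovered without having to track how an isomorphism interacts with the ergodic decomposition on a component-by-component basis, since the information spectrum is invariant as an object of the process itself.
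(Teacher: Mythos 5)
Your proposal is correct and is exactly the paper's argument: the authors introduce this corollary with the single sentence ``By combining Corollary \ref{corollary-invariance-information-spectrum} with Theorem \ref{theorem:explicit-information-spectrum}, we can recover the following result by \v{S}ujan,'' which is precisely your three-step chain of equalities. Your remark that the slack parameter $\gamma \downarrow 0$ makes the equality hold at every $\tau$ (not just continuity points) is a fair observation, but no further work is needed.
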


Theorem \ref{theorem:spectrum-homomorphism} is a generalization
of Proposition \ref{proposition:Kolmogorov-Sinai} in the sense that the former implies the latter. 
In fact, by noting the ergodic decomposition of the entropy rate \cite[Theorem 3.3]{gray:book}, the identity of the expectation \cite[Eq.~(21.9)]{billingsley-book}, and Theorem \ref{theorem:explicit-information-spectrum}, we can write
\begin{align*}
H(\bm{X}) &= \int_\Theta H(\bm{X}_\theta) dw(\theta) \\
&= \int_0^\infty w(\{ \theta : H(\bm{X}_\theta) > \tau \}) d\tau \\
&= \int_0^\infty (1-F_\bm{X}(\tau)) d\tau;
\end{align*}
and similarly for $H(\bm{Y})$. 
Thus, by Theorem \ref{theorem:spectrum-homomorphism}, we have
\begin{align*}
H(\bm{X}) =  \int_0^\infty (1-F_\bm{X}(\tau)) d\tau \ge \int_0^\infty (1-F_\bm{Y}(\tau) ) d\tau = H(\bm{Y}).
\end{align*}

%%%%%%%%%%%%%%%%%%%%%%%%%%%%%%%%%%%%%%%%%%%%%%%%%%%%%%%%%%%%%%%%%%%%%%%%%%%%%
%%%%%%%%%%%%%%%%%%%%%%%%%%%%%%%%%%%%%%%%%%%%%%%%%%%%%%%%%%%%%%%%%%%%%%%%%%%%%
\section{Proof of Theorem \ref{theorem:explicit-information-spectrum}} \label{section:proof-explicit-information-spectrum}
  
When all components are i.i.d., \eqref{eq:explicit-formula} was proved in \cite[Lemma 1.4.4]{han:book}.
Exactly the same proof applies to one direction: the left hand side is  less than or equal to the right hand side in \eqref{eq:explicit-formula}.  
The opposite direction of the proof also proceeds along the line of \cite[Lemma 1.4.4]{han:book}, but it requires more complicated arguments to handle ergodic components. 
Let us first prove the former direction. 
To that end, we first note that
\begin{align*}
\Pr\bigg( \frac{1}{n} \log \frac{1}{P_{X^n}(X^n)} \le \tau + \gamma \bigg) 
= \int \Pr\bigg( \frac{1}{n} \log \frac{1}{P_{X^n}(X_\theta^n)} \le \tau + \gamma \bigg)  dw(\theta).
\end{align*}
For each $X_\theta^n \sim P_{X_\theta^n}$, by the union bound, we have
\begin{align*}
\Pr\bigg( \frac{1}{n} \log \frac{1}{P_{X^n}(X_\theta^n)} \le \tau + \gamma \bigg) 
&= \Pr\bigg( \frac{1}{n} \log \frac{1}{P_{X_\theta^n}(X_\theta^n)} + \frac{1}{n} \log \frac{P_{X_\theta^n}(X_\theta^n)}{P_{X^n}(X_\theta^n)} \le \tau + \gamma \bigg)  \\
&\le \Pr\bigg( \frac{1}{n} \log \frac{1}{P_{X_\theta^n}(X_\theta^n)} \le \tau+2\gamma \mbox{ or } \frac{1}{n} \log \frac{P_{X_\theta^n}(X_\theta^n)}{P_{X^n}(X_\theta^n)} \le - \gamma \bigg)  \\
&\le \Pr\bigg( \frac{1}{n} \log \frac{1}{P_{X_\theta^n}(X_\theta^n)} \le \tau+2\gamma \bigg) + \Pr\bigg( \frac{1}{n} \log \frac{P_{X_\theta^n}(X_\theta^n)}{P_{X^n}(X_\theta^n)} \le - \gamma \bigg).
\end{align*}
Here, the second term in the last equation can be upper bounded as
\begin{align}
\Pr\bigg( \frac{1}{n} \log \frac{P_{X_\theta^n}(X_\theta^n)}{P_{X^n}(X_\theta^n)} \le -\gamma \bigg)
&= \sum_{x^n \in {\cal X}^n} P_{X_\theta^n}(x^n) \bol{1}\bigg[ \frac{1}{n} \log \frac{P_{X_\theta^n}(x^n)}{P_{X^n}(x^n)} \le -\gamma \bigg] \nonumber \\
&\le \sum_{x^n \in {\cal X}^n} P_{X^n}(x^n) 2^{-n\gamma} \bol{1}\bigg[ \frac{1}{n} \log \frac{P_{X_\theta^n}(x^n)}{P_{X^n}(x^n)} \le -\gamma \bigg] \nonumber \\ 
&\le 2^{-n\gamma}. \label{eq:change-of-measure}
\end{align}
Thus, by the Fatou lemma, we have
\begin{align*}
\limsup_{n\to\infty} \Pr\bigg( \frac{1}{n} \log \frac{1}{P_{X^n}(X^n)} \le \tau + \gamma \bigg) 
\le \int \limsup_{n\to\infty} \Pr\bigg( \frac{1}{n} \log \frac{1}{P_{X_\theta^n}(X_\theta^n)} \le \tau+2\gamma \bigg) dw(\theta).
\end{align*}
For each $\theta \in \Theta$ with $H(\bm{X}_\theta) > \tau + 3\gamma$, the AEP with respect to the ergodic process $\bm{X}_\theta$ implies 
\begin{align*}
 \limsup_{n\to\infty} \Pr\bigg( \frac{1}{n} \log \frac{1}{P_{X_\theta^n}(X_\theta^n)} \le \tau+2\gamma \bigg) 
&\le \limsup_{n\to\infty} \Pr\bigg( \frac{1}{n} \log \frac{1}{P_{X_\theta^n}(X_\theta^n)} \le H(\bm{X}_\theta) - \gamma \bigg) \\
&=0. 
\end{align*}
Thus, we have
\begin{align*}
 \limsup_{n\to\infty} \Pr\bigg( \frac{1}{n} \log \frac{1}{P_{X_\theta^n}(X_\theta^n)} \le \tau+2\gamma \bigg)  \le w(\{ \theta : H(\bm{X}_\theta) \le \tau +3\gamma \}).
\end{align*}
Since the cumulative distribution function induced by a measure is right continuous (eg.~see \cite[Sec.~14]{billingsley-book}),
by taking the limit $\gamma\downarrow 0$, we have the desired inequality. 

To prove the opposite direction, we approximate each ergodic component $\bm{X}_\theta = \{ X_{\theta,n}\}_{n\in\mathbb{Z}}$ by
the $k$th order Markov process as 
\begin{align*}
P^{(k)}_{X_{\theta,k+1}^n|X_{\theta,1}^k}(x_{k+1}^n | x_1^k) := \prod_{i=k+1}^n P_{X_{\theta,i}|X_{\theta,i-k}^{i-1}}(x_i | x_{i-k}^{i-1}).
\end{align*}
Note that $P_{X_{\theta,i}|X_{\theta,i-k}^{i-1}}(x_i | x_{i-k}^{i-1}) = P_{X_{\theta,k+1}|X_{\theta,1}^{k}}(x_i | x_{i-k}^{i-1})$
by stationarity of $\bm{X}_\theta$. For a given sequence $x_1^n \in {\cal X}^n$, let
\begin{align*}
Q^{(k)}_{x_1^n}(a_1^{k+1}) := \frac{|\{ i : k+1 \le i \le n, x_{i-k}^i = a_1^{k+1}  \}|}{n-k}
\end{align*}
be the $k$th order Markov type (overlapping $(k+1)$-block empirical distribution). 
The Markov approximation has a nice property that, when two sequences $x_1^n$ and $\tilde{x}_1^n$
have the same $k$th order Markov type $q_k=Q^{(k)}_{x_1^n}=Q^{(k)}_{\tilde{x}_1^n}$, these sequences have
the same conditional probability
\begin{align} \label{eq:same-probability}
P^{(k)}_{X_{\theta,k+1}^n|X_{\theta,1}^k}(x_{k+1}^n | x_1^k) = P^{(k)}_{X_{\theta,k+1}^n|X_{\theta,1}^k}(\tilde{x}_{k+1}^n | \tilde{x}_1^k).
\end{align}
Denote by ${\cal T}_{q_k}^n$ the set of all sequences having the $k$th order Markov type $q_k$. 

Let us consider a mixture of the Markov approximations, i.e., 
\begin{align} \label{eq:mixed-markov-approximation}
P^{(k)}_{X_{k+1}^n|X_1^k}(x_{k+1}^n |x_1^k) := \int P^{(k)}_{X_{\theta,k+1}^n|X_{\theta,1}^k}(x_{k+1}^n | x_1^k) dw(\theta).
\end{align}
Since $P^{(k)}_{X_{k+1}^n|X_1^k}$ is a mixture of $k$th order Markov processes, it is invariant over a type class, i.e.,
\begin{align} \label{eq:same-probability-mixed}
P^{(k)}_{X_{k+1}^n|X_1^k}(x_{k+1}^n |x_1^k) = P^{(k)}_{X_{k+1}^n|X_1^k}(\tilde{x}_{k+1}^n |\tilde{x}_1^k)
\end{align}
for every $x_1^n, \tilde{x}_1^n \in {\cal T}_{q_k}^n$.

Let 
\begin{align*}
c_k := \max_{x_1^k \in {\cal X}^k: \atop P_{X_1^k}(x_1^k)>0} \log \frac{1}{P_{X_1^k}(x_1^k)}.
\end{align*}
We first note that
\begin{align*}
\Pr\bigg( \frac{1}{n} \log \frac{1}{P_{X^n}(X^n)} \le \tau + 4\gamma \bigg)
&\ge \Pr\bigg( \frac{1}{n} \log \frac{1}{P_{X^n_{k+1}|X_1^k}(X^n_{k+1}|X_1^k)} \le \tau + 4\gamma - \frac{c_k}{n} \bigg) \\
&\ge \Pr\bigg( \frac{1}{n} \log \frac{1}{P_{X^n_{k+1}|X_1^k}(X^n_{k+1}|X_1^k)} \le \tau + 3\gamma  \bigg)
\end{align*}
for sufficiently large $n$. Next, by using the above introduced mixture of Markov approximation, we have
\begin{align}
\lefteqn{ \Pr\bigg( \frac{1}{n} \log \frac{1}{P_{X^n_{k+1}|X_1^k}(X^n_{k+1}|X_1^k)} \le \tau + 3\gamma  \bigg) } \nonumber \\
&= \Pr\bigg( \frac{1}{n} \log \frac{1}{P^{(k)}_{X^n_{k+1}|X_1^k}(X^n_{k+1}|X_1^k)} + \frac{1}{n} \log \frac{P^{(k)}_{X^n_{k+1}|X_1^k}(X^n_{k+1}|X_1^k)}{P_{X^n_{k+1}|X_1^k}(X^n_{k+1}|X_1^k)} \le \tau + 3\gamma  \bigg)
 \nonumber  \\
&\ge \Pr\bigg( \frac{1}{n} \log \frac{1}{P^{(k)}_{X^n_{k+1}|X_1^k}(X^n_{k+1}|X_1^k)} \le \tau + 2\gamma  \bigg) 
 - \Pr\bigg( \frac{1}{n} \log \frac{P^{(k)}_{X^n_{k+1}|X_1^k}(X^n_{k+1}|X_1^k)}{P_{X^n_{k+1}|X_1^k}(X^n_{k+1}|X_1^k)} \ge \gamma \bigg) \nonumber \\
&\ge  \Pr\bigg( \frac{1}{n} \log \frac{1}{P^{(k)}_{X^n_{k+1}|X_1^k}(X^n_{k+1}|X_1^k)} \le \tau + 2\gamma  \bigg)  - 2^{-\gamma n}, \label{eq:proof-explicit-spectrum-1}
\end{align}
where the final inequality follows in a similar manner as \eqref{eq:change-of-measure}.

For each sequence $x_1^n \in {\cal X}^n$, define 
\begin{align*}
\Phi(x_1^n) := \big\{ \theta \in \Theta : P^{(k)}_{X_{\theta,k+1}^n|X_{\theta,1}^k}(x_{k+1}^n|x_1^k) \le 2^{\sqrt{n}} P^{(k)}_{X_{k+1}^n|X_1^k}(x_{k+1}^n|x_1^k) \big\}.
\end{align*}
By applying the Markov inequality along with \eqref{eq:mixed-markov-approximation}, we have
\begin{align*}
w(\Phi(x_1^n)) \ge 1 - 2^{-\sqrt{n}}
\end{align*}
for every $x_1^n \in {\cal X}^n$. From \eqref{eq:same-probability} and \eqref{eq:same-probability-mixed}, 
$\Phi(x_1^n) = \Phi(\tilde{x}_1^n)$ for every $x_1^n,\tilde{x}_1^n \in {\cal T}_{q_k}^n$, which enables us to express $\Phi(x_1^n)$
as $\Phi(q_k)$ with $q_k = Q^{(k)}_{x_1^n}$. Let 
\begin{align*}
\Phi_n^* := \bigcap_{q_k \in {\cal Q}_k} \Phi(q_k),
\end{align*}
where ${\cal Q}_k$ is the set of all $k$th order Markov types. Since the cardinality of ${\cal Q}_k$ can be bounded as (eg.~see \cite[Theorem 1.6.14]{Shields-book})
\begin{align*}
|{\cal Q}_k| \le L_{n,k} := (n-k+1)^{|{\cal X}|^{k+1}},
\end{align*}
by the union bound, we have
\begin{align} \label{eq:the-intersection-good-theta}
w(\Phi_n^*) \ge 1 - L_{n,k} 2^{-\sqrt{n}}.
\end{align}

By removing the complement of $\Phi_n^*$, we can evaluate the first term of \eqref{eq:proof-explicit-spectrum-1} as follows:
\begin{align}
\lefteqn{ \Pr\bigg( \frac{1}{n} \log \frac{1}{P^{(k)}_{X^n_{k+1}|X_1^k}(X^n_{k+1}|X_1^k)} \le \tau + 2\gamma  \bigg) } \nonumber \\
&= \int  \Pr\bigg( \frac{1}{n} \log \frac{1}{P^{(k)}_{X^n_{k+1}|X_1^k}(X^n_{\theta,k+1}|X_{\theta,1}^k)} \le \tau + 2\gamma  \bigg) dw(\theta) \nonumber \\
&\ge \int_{\Phi_n^*}  \Pr\bigg( \frac{1}{n} \log \frac{1}{P^{(k)}_{X^n_{k+1}|X_1^k}(X^n_{\theta,k+1}|X_{\theta,1}^k)} \le \tau + 2\gamma  \bigg) dw(\theta) \nonumber \\
&\ge \int_{\Phi_n^*}  \Pr\bigg( \frac{1}{n} \log \frac{1}{P^{(k)}_{X^n_{\theta,k+1}|X_{\theta,1}^k}(X^n_{\theta,k+1}|X_{\theta,1}^k)} \le \tau + 2\gamma -\frac{1}{\sqrt{n}} \bigg) dw(\theta) \nonumber \\
&\ge \int  \Pr\bigg( \frac{1}{n} \log \frac{1}{P^{(k)}_{X^n_{\theta,k+1}|X_{\theta,1}^k}(X^n_{\theta,k+1}|X_{\theta,1}^k)} \le \tau + 2\gamma -\frac{1}{\sqrt{n}} \bigg) dw(\theta) - L_{n,k} 2^{-\sqrt{n}}, 
\label{eq:removing-bad-theta}
\end{align}
where the second last inequality follows from the definition of $\Phi_n^*$
and the last inequality follows from \eqref{eq:the-intersection-good-theta}.
By combining the above argument along with the Fatou lemma, we have
\begin{align}
\lefteqn{ \liminf_{n\to\infty} \Pr\bigg( \frac{1}{n} \log \frac{1}{P_{X^n}(X^n)} \le \tau + 4\gamma \bigg) } \nonumber \\
&\ge \int \liminf_{k\to\infty} \liminf_{n\to\infty}  \Pr\bigg( \frac{1}{n} \log \frac{1}{P^{(k)}_{X^n_{\theta,k+1}|X_{\theta,1}^k}(X^n_{\theta,k+1}|X_{\theta,1}^k)} \le \tau + 2\gamma -\frac{1}{\sqrt{n}} \bigg) dw(\theta). 
 \label{eq:proof-explicit-spectrum-2}
\end{align}

Now, note that
\begin{align*}
\mathbb{E}\bigg[ \frac{1}{n-k} \log \frac{1}{P^{(k)}_{X^n_{\theta,k+1}|X_{\theta,1}^k}(X^n_{\theta,k+1}|X_{\theta,1}^k)} \bigg]
&= \mathbb{E}\bigg[ \frac{1}{n-k} \sum_{i=k+1}^n \log \frac{1}{P_{X_{\theta,k+1}|X_{\theta,1}^k}(X_{\theta,i}|X_{\theta,i-k}^{i-1})} \bigg] \\
&= H(X_{\theta,k+1}|X_{\theta,1}^k).
\end{align*}
Here, $H(X_{\theta,k+1}|X_{\theta,1}^k)$ is nondecreasing and converges to the entropy rate $H(\bm{X}_\theta)$
(cf.~\cite[Theorem 4.2.1, Theorem 4.2.2]{cover}). Thus, when $\tau \ge H(\bm{X}_\theta)$, by taking sufficiently large $k$, we have
$\tau \ge H(X_{\theta,k}|X_{\theta,1}^k)-\gamma$. For such $k$, by the ergodic theorem, we have
\begin{align*}
\lefteqn{ \liminf_{n\to\infty} \Pr\bigg( \frac{1}{n} \log \frac{1}{P^{(k)}_{X^n_{\theta,k+1}|X_{\theta,1}^k}(X^n_{\theta,k+1}|X_{\theta,1}^k)} \le \tau + 2\gamma -\frac{1}{\sqrt{n}} \bigg) } \\
&\ge \liminf_{n\to\infty} \Pr\bigg( \frac{1}{n} \log \frac{1}{P^{(k)}_{X^n_{\theta,k+1}|X_{\theta,1}^k}(X^n_{\theta,k+1}|X_{\theta,1}^k)} \le H(X_{\theta,k}|X_{\theta,1}^k) + \gamma -\frac{1}{\sqrt{n}} \bigg)
\\&=1.
\end{align*}
By combining it with \eqref{eq:proof-explicit-spectrum-2}, we have
\begin{align*}
 \liminf_{n\to\infty} \Pr\bigg( \frac{1}{n} \log \frac{1}{P_{X^n}(X^n)} \le \tau + 4\gamma \bigg) \ge w(\{ \theta : H(\bm{X}_\theta) \le \tau \}).
\end{align*}
Since this inequality holds for any $\gamma>0$, we have the desired inequality. \qed  
  
%%%%%%%%%%%%%%%%%%%%%%%%%%%%%%%%%%%%%%%%%%%%%%%%%%%%%%%%%%%%%%%%%%%%%%%%%%%%
%%%%%%%%%%%%%%%%%%%%%%%%%%%%%%%%%%%%%%%%%%%%%%%%%%%%%%%%%%%%%%%%%%%%%%%%%%%%
\section{Proof of Theorem \ref{theorem:spectrum-homomorphism}} \label{section:proof-spectrum-homomorphism}
  
Let $\bm{X}=\{X_n\}_{n\in\mathbb{Z}}$ and $\bm{Y}=\{Y_n\}_{n\in\mathbb{Z}}$ be stationary random processes
determined by $({\cal X}^{\mathbb{Z}}, \mathscr{B}_{\cal X}, \mu,S)$ and $({\cal Y}^{\mathbb{Z}}, \mathscr{B}_{\cal Y},\nu,S)$.
For a given homomorphism $\phi : {\cal X}^\mathbb{Z} \to {\cal Y}^\mathbb{Z}$ and integers $m,n$, we can construct a coupling
\begin{align} \label{eq:coupling}
P_{X_{-m}^m Y_{-n}^n}(x_{-m}^m,y_{-n}^n) = \mu\big( [x_{-m}^m] \cap \phi^{-1}([y_{-n}^n]) \big)
\end{align}  
induced by the process $\bm{X}$ and the homomorphism $\phi$.
Since $\nu = \phi_* \mu$, the marginal of the joint distribution $P_{X_{-m}^m Y_{-n}^n}$ coincides with
the distribution induced by the process $\bm{Y}$. In the following argument, by a slight abuse of notation,
we interpret that the random variable $(X_{-m}^m,Y_{-n}^n)$ are distributed according to the joint distribution 
given by \eqref{eq:coupling}.  
  
First, we approximate an arbitrary homomorphism by using a finite function (eg.~see \cite[Theorem 3.1]{Gray:75} or \cite[Theorem 1.8.1]{Shields-book}).
\begin{lemma} \label{lemma:approximation-homomorphism}
For a given homomorphism $\phi$ from $\bm{X}$ to $\bm{Y}$
and arbitrary $\varepsilon > 0$, there exists an integer $\ell=\ell(\varepsilon)$ and a finite function $f:{\cal X}^{2\ell+1}\to {\cal Y}$ such that
\begin{align}
\Pr\big( Y_0 \neq f(X_{-\ell}^\ell) \big)  \le \varepsilon. \label{eq:approximation-homomorphism}
\end{align}
\end{lemma}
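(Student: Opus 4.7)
The plan is to view $Y_0$ as a $\mathscr{B}_{\cal X}$-measurable function of $\bm{X}$ and to approximate it by a function of a finite window of coordinates via the martingale convergence theorem. Since $\phi$ is measurable and the zero-coordinate projection $\pi_0:{\cal Y}^{\mathbb{Z}} \to {\cal Y}$ is measurable, $Y_0 = \pi_0(\phi(\bm{X}))$ is a $\mathscr{B}_{\cal X}$-measurable random variable taking values in the finite set ${\cal Y}$. Writing $\mathscr{F}_\ell := \sigma(X_{-\ell},\ldots,X_\ell)$, every cylinder $[a_m^n]$ lies in $\mathscr{F}_\ell$ for $\ell \ge \max(|m|,|n|)$, so the algebra $\bigcup_{\ell \ge 0} \mathscr{F}_\ell$ generates $\mathscr{B}_{\cal X}$.

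First I would apply L\'evy's upward martingale convergence theorem to each of the finitely many bounded random variables $\mathbf{1}_{\{Y_0 = a\}}$: the conditional probabilities $p_{\ell,a} := \Pr(Y_0 = a \mid \mathscr{F}_\ell)$ converge almost surely and in $L^1(\mu)$ to $\mathbf{1}_{\{Y_0=a\}}$ as $\ell \to \infty$. Because $X_{-\ell}^\ell$ takes only finitely many values, each $p_{\ell,a}$ is literally a function of $X_{-\ell}^\ell$, given by
\[
p_{\ell,a}(x_{-\ell}^\ell) \;=\; \frac{P_{X_{-\ell}^\ell Y_0}(x_{-\ell}^\ell, a)}{P_{X_{-\ell}^\ell}(x_{-\ell}^\ell)}
\]
whenever the denominator is positive.

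Next I would build the approximator as the maximum-a-posteriori decoder
\[
f_\ell(x_{-\ell}^\ell) \;:=\; \argmax_{a \in {\cal Y}} \, p_{\ell,a}(x_{-\ell}^\ell),
\]
breaking ties by any fixed rule and assigning $f_\ell$ arbitrarily on sequences of zero probability. The induced error probability satisfies
\[
\Pr\bigl( Y_0 \neq f_\ell(X_{-\ell}^\ell) \bigr) \;=\; \mathbb{E}\Bigl[\,1 - \max_{a \in {\cal Y}} p_{\ell,a}(X_{-\ell}^\ell) \,\Bigr],
\]
and since exactly one of the indicators $\mathbf{1}_{\{Y_0=a\}}$ equals $1$ on each sample, $\max_a p_{\ell,a}(X_{-\ell}^\ell) \to 1$ almost surely. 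Dominated convergence (the integrand lies in $[0,1]$) then gives $\Pr(Y_0 \neq f_\ell(X_{-\ell}^\ell)) \to 0$, and choosing $\ell = \ell(\varepsilon)$ large enough yields \eqref{eq:approximation-homomorphism}.

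I do not anticipate any serious obstacle; this is essentially the standard finitary-coding approximation (cf.~\cite{Gray:75, Shields-book}). A more hands-on alternative is to approximate each event $\phi^{-1}(\pi_0^{-1}(\{a\})) \in \mathscr{B}_{\cal X}$ directly by a cylinder-algebra event in $\mathscr{F}_\ell$ via the usual $\pi$-$\lambda$ density lemma; however, stitching the $|{\cal Y}|$ resulting approximations into a single well-defined function $f$ forces one to handle overlaps and omissions between the approximating sets, which the MAP construction sidesteps automatically.
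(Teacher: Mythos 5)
Your argument is correct, and it is worth noting that the paper itself does not prove this lemma at all: it invokes it as a known fact, citing \cite[Theorem 3.1]{Gray:75} and \cite[Theorem 1.8.1]{Shields-book}. The proof in those references is essentially your ``hands-on alternative'': for each $a \in {\cal Y}$ one approximates the set $B_a := \{\bm{x} : (\phi(\bm{x}))_0 = a\}$ in $\mu$-measure by a set $C_a$ in the algebra of finite-dimensional cylinders (possible because that algebra generates $\mathscr{B}_{\cal X}$), then defines $f$ on the window $[-\ell,\ell]$ from the $C_a$'s; the overlap/omission bookkeeping you were wary of is handled by fixing a priority order on ${\cal Y}$ and costs nothing beyond $\sum_a \mu(B_a \triangle C_a) \le \varepsilon$. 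Your martingale-plus-MAP route is a genuine and clean alternative: $Y_0 = \pi_0(\phi(\bm{X}))$ is indeed $\mathscr{B}_{\cal X}$-measurable under the coupling the paper later writes down in \eqref{eq:coupling}, L\'evy's upward theorem applied to the finitely many indicators $\mathbf{1}[Y_0=a]$ is exactly the right tool, the identity $\Pr(Y_0 \neq f_\ell(X_{-\ell}^\ell)) = \mathbb{E}\big[1-\max_{a} p_{\ell,a}(X_{-\ell}^\ell)\big]$ is correct, and dominated convergence finishes the job. What the MAP construction buys is that $f_\ell$ is canonical and its error probability is automatically the minimum over all functions of $X_{-\ell}^\ell$, so no stitching is needed; what the set-approximation route buys is that it rests only on the elementary fact that a generating algebra approximates its $\sigma$-algebra in measure, with no appeal to martingale convergence. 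Either argument establishes the lemma.
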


Let us now fix an integer $n$, and set $N=2n+1$. By stationarity, \eqref{eq:approximation-homomorphism} implies 
\begin{align} \label{eq:symbol-error-probability}
\mathbb{E}\bigg[ \frac{1}{N} d_H(Y_{-n}^n, \tilde{Y}_{-n}^n) \bigg] = \frac{1}{N} \sum_{i=-n}^n \Pr\big( Y_i \neq \tilde{Y}_i \big) \le \varepsilon,
\end{align}
where $\tilde{Y}_i = f(X_{i-\ell}^{i+\ell})$ and $d_H(\cdot,\cdot)$ is the Hamming distance.
Furthermore, by the Markov inequality, \eqref{eq:symbol-error-probability} implies 
\begin{align} \label{eq:excess-Haming-distortion}
\Pr\bigg( \frac{1}{N} d_H(Y_{-n}^n, \tilde{Y}_{-n}^n) > \beta \bigg) \le \frac{\varepsilon}{\beta}
\end{align}
for arbitrary $\beta > 0$.

The next lemma is the most key part of the proof of Theorem \ref{theorem:spectrum-homomorphism}. 
%%%
\begin{lemma} \label{lemma:finite-bound}
For a given homomorphism $\phi$ from $\bm{X}$ to $\bm{Y}$ and arbitrary $\varepsilon>0$,
let $\ell$ and $f$ be the integer and finite function specified by Lemma \ref{lemma:approximation-homomorphism}.
For an integer $n$, set $m=n+\ell$ and $N=2n+1$. Then, we have
\begin{align*}
\Pr\bigg( \frac{1}{N} \log \frac{1}{P_{X_{-m}^m}(X_{-m}^m)} \le \tau + \gamma \bigg) \le \Pr\bigg( \frac{1}{N} \log \frac{1}{P_{Y_{-n}^n}(Y_{-n}^n)} \le \tau + 2 \gamma \bigg)
 + \frac{\varepsilon}{\beta} + 2^{-N(\gamma-h(\beta)-\beta \log |{\cal Y}|)}
\end{align*}
for any $\tau \in \mathbb{R}^+$, $\gamma>0$, and $\beta>0$, where $h(\cdot)$ is the binary entropy function.
\end{lemma}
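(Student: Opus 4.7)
The plan is to introduce the ``decoded'' approximation $\tilde{Y}_{-n}^{n}:=(f(X_{i-\ell}^{i+\ell}))_{i=-n}^{n}$, which, since $m=n+\ell$, is a deterministic function $g$ of $X_{-m}^{m}$. I will then bridge from $P_{X_{-m}^{m}}$ to $P_{Y_{-n}^{n}}$ through the distribution $P_{\tilde Y_{-n}^{n}}$ of the approximation, in two stages: a pointwise data-processing step, followed by a Hamming-distance transport bound that uses \eqref{eq:excess-Haming-distortion} together with a counting argument in $\mathcal{Y}^{N}$.

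For the data-processing step, since $\tilde Y_{-n}^{n}=g(X_{-m}^{m})$, for any source sequence $x$ with $g(x)=\tilde y_{-n}^{n}$ we have
\[
P_{\tilde Y_{-n}^{n}}(\tilde y_{-n}^{n})
=\sum_{x':\,g(x')=\tilde y_{-n}^{n}}P_{X_{-m}^{m}}(x')
\ge P_{X_{-m}^{m}}(x),
\]
so the self-information can only decrease under $g$; consequently
$\{\frac{1}{N}\log\frac{1}{P_{X_{-m}^{m}}(X_{-m}^{m})}\le \tau+\gamma\}$ is contained in $\{\frac{1}{N}\log\frac{1}{P_{\tilde Y_{-n}^{n}}(\tilde Y_{-n}^{n})}\le \tau+\gamma\}$, and it suffices to bound the probability of the latter event.

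For the transport step, I split this latter event by the Hamming distance between $Y_{-n}^{n}$ and $\tilde Y_{-n}^{n}$ and by the self-information of $Y_{-n}^{n}$. The tail $\{\frac{1}{N}d_{H}(Y_{-n}^{n},\tilde Y_{-n}^{n})>\beta\}$ contributes at most $\varepsilon/\beta$ by \eqref{eq:excess-Haming-distortion}, and the sub-event on which $\frac{1}{N}\log\frac{1}{P_{Y_{-n}^{n}}(Y_{-n}^{n})}\le \tau+2\gamma$ yields directly the first term on the right hand side. What remains is the residual event
\[
\mathcal{E}:=\bigl\{P_{\tilde Y_{-n}^{n}}(\tilde Y_{-n}^{n})\ge 2^{-N(\tau+\gamma)},\ P_{Y_{-n}^{n}}(Y_{-n}^{n})<2^{-N(\tau+2\gamma)},\ d_{H}(Y_{-n}^{n},\tilde Y_{-n}^{n})\le \beta N\bigr\},
\]
which I bound by enumeration. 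Normalization of $P_{\tilde Y_{-n}^{n}}$ forces at most $2^{N(\tau+\gamma)}$ admissible values of $\tilde y_{-n}^{n}$; the Hamming ball of radius $\beta N$ in $\mathcal{Y}^{N}$ has cardinality at most $\binom{N}{\lfloor\beta N\rfloor}|\mathcal{Y}|^{\beta N}\le 2^{N(h(\beta)+\beta\log|\mathcal{Y}|)}$; and each $y_{-n}^{n}$ entering $\mathcal{E}$ satisfies $P_{Y_{-n}^{n}}(y_{-n}^{n})<2^{-N(\tau+2\gamma)}$. Using $P_{Y_{-n}^{n}\tilde Y_{-n}^{n}}(y,\tilde y)\le P_{Y_{-n}^{n}}(y)$ and summing first over $y$ in the Hamming ball around $\tilde y$ and then over eligible $\tilde y$ gives $\Pr(\mathcal{E})\le 2^{-N(\gamma-h(\beta)-\beta\log|\mathcal{Y}|)}$, which is the third term.

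The main obstacle I anticipate is bookkeeping the three exponents in the counting step so that the $\tau$'s cancel and leave exactly the combinatorial exponent $\gamma-h(\beta)-\beta\log|\mathcal{Y}|$; this is also the reason for the apparent asymmetry between the $\gamma$ slack on the $X$ side and the $2\gamma$ slack on the $Y$ side of the statement, since the extra $\gamma$ is precisely what is needed to absorb the Hamming-ball volume in the enumeration.
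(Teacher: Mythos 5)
Your proposal is correct: every step goes through, and the exponents do cancel as you claim ($2^{N(\tau+\gamma)}$ eligible values of $\tilde{y}$, times the ball volume $2^{N(h(\beta)+\beta\log|\mathcal{Y}|)}$, times the per-point bound $2^{-N(\tau+2\gamma)}$ gives exactly $2^{-N(\gamma-h(\beta)-\beta\log|\mathcal{Y}|)}$), with the usual caveat that the Hamming-ball volume bound requires $0<\beta<1/2$, a restriction the paper also imposes. The route is, however, organized differently from the paper's. The paper never introduces the distribution $P_{\tilde{Y}_{-n}^n}$ of the decoded approximation; instead it works directly on the coupling \eqref{eq:coupling} with the single ``bad'' set $\mathcal{S}=\{(x,y): \frac{1}{N}\log\frac{1}{P_{X_{-m}^m}(x)}\le \frac{1}{N}\log\frac{1}{P_{Y_{-n}^n}(y)}-\gamma\}$, splits it by the Hamming-closeness set $\mathcal{C}$, and bounds $P_{X_{-m}^mY_{-n}^n}(\mathcal{S}\cap\mathcal{C})$ by the change of measure $P_{Y_{-n}^n}(y)\le 2^{-N\gamma}P_{X_{-m}^m}(x)$ followed by $\sum_x P_{X_{-m}^m}(x)\cdot|\{y:(x,y)\in\mathcal{C}\}|\le 2^{N(h(\beta)+\beta\log|\mathcal{Y}|)}$; the threshold $\tau$ only enters at the very end, via $P(\mathcal{S})\ge \Pr(\cdot\le\tau+\gamma)-\Pr(\cdot\le\tau+2\gamma)$. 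Your two-stage bridge --- a pointwise data-processing inequality $P_{\tilde{Y}_{-n}^n}(g(x))\ge P_{X_{-m}^m}(x)$ to pass from $X_{-m}^m$ to $\tilde{Y}_{-n}^n$, then an enumeration of high-probability $\tilde{y}$'s to pass from $\tilde{Y}_{-n}^n$ to $Y_{-n}^n$ --- substitutes a counting-of-atoms argument for the paper's change-of-measure sum over the coupling, at the cost of carrying $\tau$ through the combinatorics. Both rest on the same three ingredients (Lemma \ref{lemma:approximation-homomorphism}, the excess-distortion bound \eqref{eq:excess-Haming-distortion}, and the Hamming-ball entropy bound), so neither buys extra generality; the paper's version is marginally cleaner in that the counting step is $\tau$-free, while yours makes the role of the intermediate process $\tilde{Y}$ explicit.
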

%%%
\begin{proof}
Let
\begin{align*}
{\cal S} := \bigg\{ (x_{-m}^m,y_{-n}^n) \in {\cal X}^M \times {\cal Y}^N: \frac{1}{N} \log \frac{1}{P_{X_{-m}^m}(x_{-m}^m)} \le \frac{1}{N} \log \frac{1}{P_{Y_{-n}^n}(y_{-n}^n)} - \gamma \bigg\}
\end{align*}
and
\begin{align*}
{\cal C} := \bigg\{ (x_{-m}^m,y_{-n}^n) \in {\cal X}^M \times {\cal Y}^N : d_H(y_{-n}^n, f_{-n}^n(x_{-m}^m)) \le N\beta \bigg\},
\end{align*}
where $M=2m+1$ and $f_{-n}^n(x_{-m}^m) = (f(x_{-m}^{-n+\ell}),\ldots,f(x_{-\ell}^\ell),\ldots,f(x_{n-\ell}^m))$.
For the joint distribution given by \eqref{eq:coupling}, we have
\begin{align}
P_{X_{-m}^m Y_{-n}^n}({\cal S}) &= P_{X_{-m}^m Y_{-n}^n}({\cal S} \cap {\cal C}^c) + P_{X_{-m}^m Y_{-n}^n}({\cal S} \cap {\cal C}) \nonumber \\
&\le P_{X_{-m}^m Y_{-n}^n}({\cal C}^c) + P_{X_{-m}^m Y_{-n}^n}({\cal S} \cap {\cal C}) \nonumber \\
&\le \frac{\varepsilon}{\beta} + P_{X_{-m}^m Y_{-n}^n}({\cal S} \cap {\cal C}), \label{eq:proof-key-lemma-1}
\end{align}
where ${\cal C}^c$ is the complement of ${\cal C}$ and the last inequality follows from \eqref{eq:excess-Haming-distortion}.

To evaluate the second term of \eqref{eq:proof-key-lemma-1}, note that $(x_{-m}^m,y_{-n}^n) \in {\cal S}$ implies
\begin{align} \label{eq:proof-key-lemma-2}
P_{Y_{-n}^n}(y_{-n}^n) \le 2^{-N\gamma} P_{X_{-m}^m}(x_{-m}^m);
\end{align}
also note that, for fixed $x_{-m}^m \in {\cal X}^M$, 
\begin{align} \label{eq:proof-key-lemma-3}
|\{ y_{-n}^n \in {\cal Y}^N : (x_{-m}^m,y_{-n}^n) \in {\cal C} \}| \le |{\cal Y}|^{N\beta} 2^{Nh(\beta)}
\end{align}
holds for $0 < \beta < 1/2$.
By noting these facts, we have
\begin{align}
P_{X_{-m}^m Y_{-n}^n}({\cal S} \cap {\cal C}) &= \sum_{(x_{-m}^m,y_{-n}^n) \in {\cal S}\cap {\cal C}} P_{X_{-m}^m Y_{-n}^n}(x_{-m}^m,y_{-n}^n) \nonumber \\
&\le \sum_{(x_{-m}^m,y_{-n}^n) \in {\cal S}\cap {\cal C}} P_{Y_{-n}^n}(y_{-n}^n) \nonumber \\
&\le 2^{-N\gamma} \sum_{(x_{-m}^m,y_{-n}^n) \in {\cal S}\cap {\cal C}} P_{X_{-m}^m}(x_{-m}^m) \nonumber \\
&\le 2^{-N\gamma} \sum_{x_{-m}^m \in {\cal X}^M} P_{X_{-m}^m}(x_{-m}^m) \sum_{y_{-n}^n \in {\cal Y}^N} \bol{1}[ (x_{-m}^m,y_{-n}^n) \in {\cal C}] \nonumber \\
&\le 2^{-N(\gamma-h(\beta)-\beta\log|{\cal Y}|)},  \label{eq:proof-key-lemma-4}
\end{align}
where the second inequality follows from \eqref{eq:proof-key-lemma-2}
and the last inequality follows from \eqref{eq:proof-key-lemma-3}.

Finally, note that 
\begin{align}
P_{X_{-m}^m Y_{-n}^n}({\cal S}) &= \Pr\bigg( \frac{1}{N} \log \frac{1}{P_{X_{-m}^m}(X_{-m}^m)} \le \frac{1}{N} \log \frac{1}{P_{Y_{-n}^n}(Y_{-n}^n)} - \gamma \bigg) \nonumber \\
&\ge \Pr\bigg( \frac{1}{N} \log \frac{1}{P_{X_{-m}^m}(X_{-m}^m)} \le \tau + \gamma,~\frac{1}{N} \log \frac{1}{P_{Y_{-n}^n}(Y_{-n}^n)} > \tau +2 \gamma \bigg) \nonumber \\
&\ge \Pr\bigg( \frac{1}{N} \log \frac{1}{P_{X_{-m}^m}(X_{-m}^m)} \le \tau + \gamma \bigg)
- \Pr\bigg( \frac{1}{N} \log \frac{1}{P_{Y_{-n}^n}(Y_{-n}^n)} \le \tau +2 \gamma \bigg)  \label{eq:proof-key-lemma-5}
\end{align}
for any $\tau \in \mathbb{R}^+$.
By combining \eqref{eq:proof-key-lemma-1}, \eqref{eq:proof-key-lemma-4}, and \eqref{eq:proof-key-lemma-5}, we have the claim of the lemma.
\end{proof}

Let $M := 2m+1 = N + 2\ell$. In order to replace $N$ with $M$ in the left hand side of the bound in Lemma \ref{lemma:finite-bound},
take $n$ sufficiently large so that $N(\tau+\gamma)/(N+2\ell) \ge \tau+\gamma/2$.
Then, the bound in Lemma \ref{lemma:finite-bound} implies 
\begin{align*}
\Pr\bigg( \frac{1}{M} \log \frac{1}{P_{X_{-m}^m}(X_{-m}^m)} \le \tau + \gamma/2 \bigg) \le \Pr\bigg( \frac{1}{N} \log \frac{1}{P_{Y_{-n}^n}(Y_{-n}^n)} \le \tau + 2 \gamma \bigg)
 + \frac{\varepsilon}{\beta} +  2^{-N(\gamma-h(\beta)-\beta \log |{\cal Y}|)}.
\end{align*}
Now, take $\beta=\beta_\gamma$ sufficiently small compared to $\gamma$ so that the exponent of the last term becomes positive; 
then by taking the limit of $n$ and by noting the stationarity of $\bm{X}$ and $\bm{Y}$, we have
\begin{align*}
\limsup_{n\to\infty} \Pr\bigg( \frac{1}{n} \log \frac{1}{P_{X^n}(X^n)} \le \tau + \gamma/2 \bigg) \le \limsup_{n\to\infty} \Pr\bigg( \frac{1}{n} \log \frac{1}{P_{Y^n}(Y^n)} \le \tau + 2\gamma \bigg) + \frac{\varepsilon}{\beta_\gamma}.
\end{align*}
Since this inequality holds for arbitrary $\varepsilon>0$,\footnote{Note that the other terms do not depend on $\ell$ anymore.} by taking the limit $\varepsilon \downarrow 0$, we have
\begin{align*}
\limsup_{n\to\infty} \Pr\bigg( \frac{1}{n} \log \frac{1}{P_{X^n}(X^n)} \le \tau + \gamma/2 \bigg) \le \limsup_{n\to\infty} \Pr\bigg( \frac{1}{n} \log \frac{1}{P_{Y^n}(Y^n)} \le \tau + 2\gamma \bigg).
\end{align*}
Finally, by taking the limit $\gamma\downarrow 0$, we have the desired result. \qed  
  
%%%%%%%%%%%%%%%%%%%%%%%%%%%%%%%%%%%%%%%%%%%%%%%%%%%%%%%%%%%%%%%%%%%%%%%%%%%%  
%%%%%%%%%%%%%%%%%%%%%%%%%%%%%%%%%%%%%%%%%%%%%%%%%%%%%%%%%%%%%%%%%%%%%%%%%%%%
%\section{Discussion} \label{section:discussion}

\section{General Dynamical System} \label{sec:general-dynamical-system}

For a general dynamical system $(\Omega, \mathscr{B},\lambda,T)$, the entropy is defined via 
homomorphism from the dynamical system to a finite alphabet random process (eg.~see \cite{gray:book}).\footnote{It is more common
to define the entropy of a dynamical system via a partition, but they are equivalent.} 
Let $\phi$ be a homomorphism from $(\Omega, \mathscr{B},\lambda,T)$ to a finite alphabet random process $\bm{X}$ determined by
$({\cal X}^{\mathbb{Z}}, \mathscr{B}_{\cal X}, \mu,S)$, where $\mu = \phi_*\lambda$.
Using such a homomorphism, the entropy of the dynamical system is defined by
\begin{align*}
H(\lambda,T) := \sup_\phi H(\bm{X}),
\end{align*}
where the supremum is taken over all homomorphisms from the dynamical system $(\Omega, \mathscr{B},\lambda,T)$ to finite alphabet random processes.
Even though computing the entropy of a dynamical system is difficult in general, thanks to Proposition \ref{proposition:Kolmogorov-Sinai},
we can compute the entropy once we find an isomorphism from the dynamical system to a finite alphabet random process. 

In a similar spirit, we define the information spectrum of a dynamical system as follows:
\begin{align*}
F_{\lambda,T}(\tau) := \inf_\phi F_{\bm{X}}(\tau),
\end{align*}
where the infimum is takin over all homomorphisms from the dynamical system $(\Omega, \mathscr{B},\lambda,T)$ to finite alphabet random 
processes.\footnote{It may be possible that $F_{\lambda,T}(\tau) = 0$ for every $\tau \in \mathbb{R}^+$. For this reason, we also define $F_{\lambda,T}(\infty):=1$.}
Again, it is difficult to compute the information spectrum of a dynamical system in general.
However, thanks to Theorem \ref{theorem:spectrum-homomorphism}, we can compute the information spectrum
once we find an isomorphism from the dynamical system to a finite alphabet random process. 

%%%%%%%%%%%
\section{Sufficient Condition} \label{section:sufficient-condition}

A stationary random process $\bm{X}$ is termed a {\em B-processes} if it is a stationary coding of an i.i.d. process,
i.e., there exists a homomorphism from an i.i.d. process to $\bm{X}$  (eg.~see \cite{Ornstein:book, Shields-book}). 
In a series of papers \cite{Ornstein:70, Ornstein:71} (see also \cite{Ornstein:book}), Ornstein showed that
any ``finitely determined" process is isomorphic to an i.i.d. process having the same entropy;\footnote{Finitely determined is
a property such that approximation in the sense of the total variational distance and the entropy 
implies approximation in the sense of the $\bar{d}$-distance.} 
and that any B-process is finitely determined (see also \cite[Chapter IV]{Shields-book} for other characterizations of the B-process). 
Consequently, the theory by Ornstein says that the class of B-processes can be classified by the entropy, i.e., two B-processes having the
same entropy are isomorphic to each other. 
Then, it is tempting to extend this classification theory to mixtures of B-processes by using the information spectrum.
However, there are some pathological cases, which will be discussed later, and all mixtures of B-processes cannot necessarily be classified only by the information spectrum.
In the following, let us confine ourselves to the following class of processes. 

\begin{definition}[Countable regular mixture of B-Process]
A stationary random process $\bm{X}$ determined by $({\cal X}^{\mathbb{Z}}, \mathscr{B}_{\cal X}, \mu,S)$ is termed a {\em countable regular mixture of B-processes} if
the ergodic decomposition is given by
\begin{align*}
\mu(B) = \sum_{\theta \in \mathbb{N}} w(\theta) \mu_\theta(B) ,~B \in {\cal B}_{\cal X}
\end{align*}
with a family of B-process $\{ \bm{X}_\theta \}_{\theta \in \mathbb{N}}$ determined by $({\cal X}^{\mathbb{Z}}, \mathscr{B}_{\cal X}, \mu_\theta,S)$
(with measure $w$ on the set of integers $\mathbb{N}$). Here, it is assumed that $H(\bm{X}_\theta) \neq H(\bm{X}_{\theta^\prime})$ for every $\theta \neq \theta^\prime$
(regularity condition) is satisfied.
\end{definition}

This class of processes can be classified by the information spectrum as follows:
\begin{proposition} \label{proposition:sufficiency}
Suppose that $\bm{X}$ and $\bm{Y}$ are countable regular mixtures of B-processes, and $F_{\bm{X}}(\tau) = F_{\bm{Y}}(\tau)$
for every $\tau \in \mathbb{R}_+$. Then, there exists an isomorphism between $\bm{X}$ and $\bm{Y}$.
\end{proposition}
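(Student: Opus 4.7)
The plan is to combine the jump structure of the information spectrum under regularity with Ornstein's isomorphism theorem applied to each ergodic component, and then stitch the componentwise isomorphisms together via the measurable ergodic decomposition.

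First, I would use Theorem~\ref{theorem:explicit-information-spectrum} together with the regularity assumption to show that $F_{\bm{X}}$ is a step function whose jumps occur precisely at the distinct entropy values $\{H(\bm{X}_\theta)\}_{\theta\in\mathbb{N}}$, with jump heights $\{w(\theta)\}_{\theta}$; the analogous statement holds for $F_{\bm{Y}}$ with $\{H(\bm{Y}_\xi)\}_{\xi}$ and $\{v(\xi)\}_{\xi}$. The hypothesis $F_{\bm{X}}=F_{\bm{Y}}$ then matches the two labelled families of jumps exactly, producing a bijection $\pi:\mathbb{N}\to\mathbb{N}$ with $H(\bm{X}_\theta)=H(\bm{Y}_{\pi(\theta)})$ and $w(\theta)=v(\pi(\theta))$ for every $\theta$.

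Second, each $\bm{X}_\theta$ and $\bm{Y}_{\pi(\theta)}$ is a B-process, and they share a common entropy. Ornstein's isomorphism theorem then furnishes an isomorphism $(\phi_\theta,\psi_\theta)$ between the component systems $({\cal X}^\mathbb{Z},\mathscr{B}_{\cal X},\mu_\theta,S)$ and $({\cal Y}^\mathbb{Z},\mathscr{B}_{\cal Y},\nu_{\pi(\theta)},S)$.

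Third, I would glue these componentwise isomorphisms into a global one. Since the ergodic components have pairwise distinct entropies, they are mutually singular and can be separated by a shift-invariant measurable statistic: concretely, the asymptotic entropy-rate function
\begin{align*}
\eta(\bm{x}):=\lim_{n\to\infty} -\frac{1}{n}\log P_{X^n}(x_1^n),
\end{align*}
which (by the change-of-measure estimate \eqref{eq:change-of-measure} combined with the Shannon--McMillan--Breiman theorem applied to each component) equals $H(\bm{X}_\theta)$ on a $\mu_\theta$-conull shift-invariant set. Disjointifying its level sets produces shift-invariant carriers $\widetilde{A}_\theta\subset{\cal X}^\mathbb{Z}$ with $\mu_\theta(\widetilde{A}_\theta)=1$, and an analogous family $\widetilde{B}_\xi\subset{\cal Y}^\mathbb{Z}$ is built on the $\bm{Y}$ side. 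I then define $\phi(\bm{x}):=\phi_\theta(\bm{x})$ for $\bm{x}\in\widetilde{A}_\theta$ (arbitrarily on the $\mu$-null complement), and symmetrically $\psi(\bm{y}):=\psi_{\pi(\theta)}(\bm{y})$ for $\bm{y}\in\widetilde{B}_{\pi(\theta)}$. Measurability follows from countability of the partition, shift-equivariance from shift-invariance of the carriers and componentwise equivariance, and the pushforward identity
\begin{align*}
\phi_*\mu=\sum_{\theta}w(\theta)(\phi_\theta)_*\mu_\theta=\sum_{\theta}v(\pi(\theta))\nu_{\pi(\theta)}=\nu
\end{align*}
follows from the matching $w=v\circ\pi$. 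The inverse relations $\psi\circ\phi=\mathrm{id}$ $\mu$-a.s.\ and $\phi\circ\psi=\mathrm{id}$ $\nu$-a.s.\ hold componentwise.

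The main obstacle is this last gluing step: each partial isomorphism is defined only modulo a $\mu_\theta$-null set, and the task is to patch countably many such maps into a single globally measurable, shift-equivariant transformation with the correct pushforward measure. The regularity condition is precisely what makes this routine, since it lets $\eta$ serve as a measurable, shift-invariant selector of the component. Without regularity, one would additionally have to match the weight sequences within each ``constant-entropy stratum'' and invoke a measurable selection to pair the B-processes across $\bm{X}$ and $\bm{Y}$.
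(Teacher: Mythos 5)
Your proposal is correct and follows essentially the same route as the paper: match components via Theorem~\ref{theorem:explicit-information-spectrum} and regularity, apply Ornstein's theorem componentwise, and paste the partial isomorphisms along shift-invariant carriers of the ergodic components. The only cosmetic difference is that you build the carriers explicitly from the entropy-rate statistic $\eta$ (which needs regularity to separate components), whereas the paper simply invokes the standard fact that distinct ergodic measures are supported on disjoint shift-invariant sets; both work, and the remaining verifications (measurability, equivariance, pushforward, a.s.\ inverses) are identical.
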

%%%
\begin{proof}
Let $({\cal X}^{\mathbb{Z}}, \mathscr{B}_{\cal X}, \mu_\theta,S)$ with $w$ and $({\cal Y}^{\mathbb{Z}}, \mathscr{B}_{\cal Y}, \nu_\xi,S)$
with $v$ be the ergodic decompositions of $\bm{X}$ and $\bm{Y}$, respectively. Since
Theorem \ref{theorem:explicit-information-spectrum} and $F_{\bm{X}}(\tau) = F_{\bm{Y}}(\tau)$ imply 
\begin{align*}
w(\{ \theta : H(\bm{X}_\theta) \le \tau \}) = v(\{ \xi : H(\bm{Y}_\xi) \le \tau \}) 
\end{align*} 
for every $\tau \in \mathbb{R}_+$ and $\bm{X}$ and $\bm{Y}$ are regular mixtures, there exists one-to-one mapping $\kappa$ such that
$H(\bm{X}_\theta) = H(\bm{Y}_{\kappa(\theta)})$ and $w(\theta)=v(\kappa(\theta))$ for $\theta \in \mathbb{N}$. To avoid cumbersome notation,
without loss of generality, we assume that $\kappa$ is the identity in the following.

Since each component $\bm{X}_\theta$ and $\bm{Y}_\theta$ are B-processes having the same entropy, Ornstein's isomorphism theorem (cf.~\cite{Ornstein:book}) implies the existence 
of an isomorphism $(\phi_\theta,\psi_\theta)$ such that $\psi_\theta(\phi_\theta(\bm{x}))=\bm{x}$
for every $\bm{x} \in C_\theta$ for some $C_\theta$ satisfying $\mu_\theta(C_\theta)=1$ and $\phi_\theta(\psi_\theta(\bm{y}))=\bm{y}$ for
every $\bm{y} \in D_{\theta}$ for some $D_\theta$ satisfying $\nu_{\theta}(D_\theta)=1$. We construct an isomorphism between $\bm{X}$ and $\bm{Y}$
by pasting these isomorphisms together. 

By a well known fact on the ergodic decomposition (eg.~see \cite{Shields-book}), there exists a disjoint family $\{ A_\theta\}$ such that $A_\theta$ is shift invariant, i.e., $SA_\theta =A_\theta$,
$\mu_\theta(A_\theta)=1$, and $\mu_{\theta^\prime}(A_\theta)=0$ for $\theta^\prime \neq \theta$; similarly, there exists a disjoint family
$\{B_\theta \}$ such that $B_\theta$ is shift invariant, $\nu_\theta(B_\theta)=1$, and $\nu_{\theta^\prime}(B_\theta)=0$ for $\theta^\prime\neq \theta$.

Let $\phi_{\theta,0}(\bm{x})= (\phi_\theta(\bm{x}))_0$. Let
\begin{align*}
\phi_0(\bm{x}) = \left\{
\begin{array}{ll}
y & \mbox{if } \bm{x} \in \cup_{\theta \in \mathbb{N}} ( A_\theta \cap \phi_{\theta,0}^{-1}(y)) \mbox{ for } y \in {\cal Y} \\
b & \mbox{if } \bm{x} \notin \cup_{\theta \in \mathbb{N}} A_\theta
\end{array}
\right.,
\end{align*}
where $b \in {\cal Y}$ is an arbitrary constant.
Since a countable union of measurable sets is measurable, $\phi_0$ defined above is measurable.
Let $(\phi(\bm{x}))_i = \phi_0(S^i \bm{x})$. 
Since $A_\theta$ is shift invariant
and $S^i \bm{x} \in \phi_{\theta,0}^{-1}(y)$ implies $(\phi_\theta(S^i \bm{x}))_0 = (\phi_\theta(\bm{x}))_i=y$ for each $i \in \mathbb{Z}$, 
we have $\phi(\bm{x})=\phi_\theta(\bm{x})$ for $\bm{x} \in A_\theta$.
Similarly, we can construct $\psi$ from $\{ \psi_\theta\}$ such that $\psi(\bm{y})=\psi_\theta(\bm{y})$ for $\bm{y} \in B_\theta$.
Then, for $\bm{x} \in A_\theta \cap \phi_\theta^{-1}(B_\theta) \cap C_\theta$, we have
\begin{align*}
\psi(\phi(\bm{x})) = \psi_\theta(\phi_\theta(\bm{x}))= \bm{x}.
\end{align*}
Furthermore, we have
\begin{align*}
\mu\bigg( \bigcup_{\theta \in \mathbb{N}} ( A_\theta \cap \phi_\theta^{-1}(B_\theta) \cap C_\theta) \bigg) 
&= \sum_{\theta \in \mathbb{N}} \mu(A_\theta \cap \phi_\theta^{-1}(B_\theta) \cap C_\theta) \\
&= \sum_{\theta \in \mathbb{N}} w(\theta) \mu_\theta( \phi_\theta^{-1}(B_\theta) \cap C_\theta) \\
&=1,
\end{align*}
where the last equality follows from $\mu_\theta(C_\theta)=1$ and $\mu_\theta(\phi_\theta^{-1}(B_\theta))=\nu_\theta(B_\theta)=1$.
Thus, $\psi(\phi(\bm{x}))=\bm{x}$ for $\mu$ almost every $\bm{x}$. Similarly, $\phi(\psi(\bm{y}))=\bm{y}$ for $\nu$ almost every $\bm{y}$.
Finally, for $B \in \mathscr{B}_{\cal Y}$, we have
\begin{align*}
\mu(\phi^{-1}(B)) &=  \sum_{\theta \in \mathbb{N}} w(\theta) \mu_\theta(\phi^{-1}(B)) \\
&= \sum_{\theta \in \mathbb{N}} w(\theta) \mu_\theta(A_\theta \cap \phi^{-1}(B)) \\
&= \sum_{\theta \in \mathbb{N}} w(\theta) \mu_\theta(A_\theta \cap \phi^{-1}_\theta(B)) \\
&= \sum_{\theta \in \mathbb{N}} w(\theta) \mu_\theta(\phi^{-1}_\theta(B)) \\
&= \sum_{\theta \in \mathbb{N}} v(\theta) \nu_\theta(B) \\
&= \nu(B),
\end{align*}
i.e., $\nu=\phi_*\mu$. Similarly, we have $\mu=\psi_*\nu$. Thus, $(\phi,\psi)$ is the desired isomorphism between $\bm{X}$ and $\bm{Y}$.
\end{proof}

It is claimed in \cite[Theorem 2]{Sujan:82b} that Proposition \ref{proposition:sufficiency} holds with
neither the countability assumption nor the regularity assumption.\footnote{More precisely, only mixtures of i.i.d. processes are
considered in \cite[Theorem 2]{Sujan:82b}, but, by virtue of the Ornstein isomorphism theorem, B-processes and i.i.d. processes are
essentially the same.}
Even though the countability assumption in Proposition \ref{proposition:sufficiency} may be dispensed 
but then with more complicated arguments,\footnote{In order to handle
a mixture with uncountable ergodic components, we need to identify a universal isomorphism in the manner of \cite{KieRah81}.}
the regularity assumption, i.e., $H(\bm{X}_\theta) \neq H(\bm{X}_{\theta^\prime})$ for every $\theta \neq \theta^\prime$,
is crucial. For instance, let $\bm{X}_1$ and $\bm{X}_2$ be different ergodic processes having the same entropy
$H(\bm{X}_1)=H(\bm{X}_2)=a$, and $\bm{X}$ be a mixture of the two processes; let $\bm{Y}$ be another ergodic process with
$H(\bm{Y})= a$. Then, $\bm{X}$ and $\bm{Y}$ have the same information spectrum 
$F_\bm{X}(\tau)=F_\bm{Y}(\tau)=\bol{1}[a \le \tau]$. However, these processes cannot be isomorphic since
$\bm{X}$ is non-ergodic while $\bm{Y}$ is ergodic (cf.~Proposition \ref{proposition:invariance-ergodic}).
Thus, the claim in \cite[Theorem 2]{Sujan:82b} has a flaw.

%%%%%%%%%%%%%
\section{Discussion} \label{section:discussion}

In this paper, we proved that the information spectrum of random processes is invariant under isomorphisms.
Our proof is based on the information spectrum approach developed in information theory.
The proof of the invariance and the analysis of the information spectrum are conducted separately in two steps:
the ergodic decomposition nor the ergodic theorem are not used in the first step, and they are only used in the second step.
In some sense, this is a first attempt of applying the information spectrum approach to ergodic theory. 

On the other hand, known constructions of isomorphisms (or homomorphisms) heavily rely on the ergodic decomposition
and ergodicity of each component. Of course, since the ergodicity is preserved under isomorphisms, it is hopeless to construct
isomorphisms without using ergodicity at all. However, it is worthwhile to pursue a construction that separates the use of ergodicity
as much as possible. Such an approach will provide new insights into ergodic theory.

\section*{Acknowledgement}

The authors would like to thank Vincent Tan for comments.

%This work is supported
%in part by JSPS KAKENHI Grant Number 16H06091.

\bibliographystyle{./IEEEtranS}
\bibliography{../../../09-04-17-bibtex/reference}

\end{document}